\newtheorem{lemma}{Lemma}[section]
\theoremstyle{definition}
\newtheorem{dfn}{Definition}
\newcommand{\prob}[1]{{\mathrm{Pr}}\left(#1\right)}
\newcommand{\be}{\begin{equation}}
\newcommand{\ee}{\end{equation}}
\newcommand{\ba}{\begin{array}}
\newcommand{\ea}{\end{array}}
\newcommand{\bea}{\begin{eqnarray}}
\newcommand{\eea}{\end{eqnarray}}
\newtheorem{prop}{Proposition}
\newcommand{\calL}{{\cal L }}
\newcommand{\calE}{{\cal E }}
\newcommand{\calZ}{{\cal Z }}
\newcommand{\calP}{{\cal P }}
\newcommand{\calS}{{\cal S }}
\newcommand{\calG}{{\cal G }}
\newcommand{\calC}{{\cal C }}
\newcommand{\la}{\langle}
\newcommand{\ra}{\rangle}
\begin{document}
\title[Constructions and noise threshold of topological subsystem codes]{Constructions and noise threshold of topological subsystem codes}

\author{Martin Suchara$^1$, Sergey Bravyi$^2$ and Barbara Terhal$^2$}
\address{$^1$ Princeton University, Princeton, NJ 08544, USA}
\address{$^2$ IBM Watson Research Center, Yorktown Heights, NY 10598, USA}
\eads{\mailto{msuchara@princeton.edu}, \mailto{sbravyi@us.ibm.com}, \mailto{bterhal@gmail.com}}

\begin{abstract}
Topological subsystem codes proposed recently by Bombin are quantum error correcting codes defined on a two-dimensional grid of qubits that permit reliable quantum information storage with a constant error threshold. These codes require only the measurement of two-qubit nearest-neighbor operators for error correction. In this paper we demonstrate that topological subsystem codes (TSCs) can be viewed as generalizations of Kitaev's honeycomb model  to $3$-valent hypergraphs.  This new connection provides a systematic way of constructing TSCs and analyzing their properties. We also derive a necessary and sufficient condition under which a syndrome measurement in a subsystem code can be reduced to measurements of the gauge group generators. Furthermore, we propose and implement some candidate decoding algorithms for one particular TSC assuming perfect error correction. Our Monte Carlo simulations indicate that this code, which we call the five-squares code, has a threshold against depolarizing noise of at least 2\%.
\end{abstract}

\pacs{03.67.Pp}
\submitto{\JPA}

\maketitle

\section{Introduction}
\label{Introduction}

Active quantum error correction~\cite{scheme_reducing_decoherence, simple_quantum_error, mixed_state_entanglement, theory_quantum_error} protects quantum information against noise by performing measurement to detect errors, and by correcting them. The goal of active error correction is to be able to store and manipulate quantum information arbitrarily long in the presence of noise provided that the noise level is below some critical threshold.

The stabilizer formalism~\cite{class_quantum_error} provides a convenient framework for studying such error correcting codes. A special subclass of these codes are topological stabilizer codes~\cite{fault_tolerant_quantum, topological_quantum_memory}. Topological stabilizer codes, in particular the surface or toric code family \cite{fault_tolerant_quantum, BK:surface} offer practical advantages over other quantum error correcting codes. First, qubits can be laid out on a two-dimensional grid and error correction measurements can be performed in-place on the grid. Secondly, topological stabilizer and topological subsystem codes \cite{bombin:topsub} have noise thresholds which correspond to phase transitions in associated statistical mechanics models; these thresholds asymptote to a finite value in the (thermodynamic) limit of large block size, obviating the need for code concatenation. Thirdly, topological stabilizer and subsystem codes may allow for the implementation of various (but not all) gates such as the important CNOT gate by topological `code deformation' \cite{BMD:codedef, RH:cluster2D, bombin:codedef_topsub}. This topological implementation of logical gates does not require an additional overhead in the number or layout of qubits. Universality is obtained by supplementing the topological implementation of logical gates by magic state distillation \cite{BK:magicdist} for those gates which cannot be implemented topologically. An important advantage of this topological implementation of gates is that the tolerable noise rate for the code architecture will be largely determined by the quantum memory noise threshold. The quantum memory noise threshold for the surface code is among the highest of any codes that has been studied \cite{CDT:codestudy}.

In this paper, we will refer to and calculate thresholds for the depolarizing error model. In this model $X$, $Y$, and $Z$ errors occur on any given qubit with probability $p/3$, and no error occurs with probability $1-p$. The depolarizing probability is $p$. For noisy error correction on the surface code, the quantum memory noise threshold which has been achieved using explicit decoding is $0.75-1.1\%$ (see e.g. \cite{RHG:threshold, RH:cluster2D, wangetal:threshold, WFH:highthresh}.

In addition,  we will only consider noise-free error correction. This means that we assume that the syndrome measurements, --obtained for topological subsystem codes via the measurement of gauge group generators--, are noise free. With noise-free error correction the noise threshold of the surface code against depolarizing noise can be as high as $15-16\%$ \cite{DCP:topdec_prl, DCP:topdec_long, wangetal:threshold}. This number is close to the theoretical maximum in~\cite{WHP:threshold}.

Although the surface code is among the best code candidates developed so far, the actual experimental implementation of a surface code architecture is an extremely daunting task. Currently, Josephson junction qubits whose coherence times and capabilities have progressed rapidly over the past few years, seem to be well-suited for a 2D surface code architecture \cite{divincenzo:JJarchitecture}. With an anticipated elementary gate time of $\tau=O(10^{-8})$ seconds for Josephson junction qubits, a $T_2$ decoherence time of $O(10^{-5})$ seconds would lead to an effective noise rate of $\tau/T_2=O(10^{-3})$ {\em below} the noise threshold of the surface code.

Beyond a sufficiently-low noise rate, a successful surface code architecture would have other technological requirements such the existence of a relatively fast measurement and the processing of syndrome data. Since the stabilizer group of the surface code is generated by four-body operators, syndrome measurement has to be performed by coupling a single ancilla qubit to four neighboring system qubits. In a Josephson junction architecture, this could for example mean that a single Josephson junction qubit would have to couple to four different resonators. The goal of this work is to ask whether one can simplify the physical implementation of error correction by employing topological subsystem codes \cite{bombin:topsub, subsystem_codes_spatially} in which 2-body (instead of 4-body) operators need to be measured for error correction. The formalism of subsystem codes and their relation to stabilizer codes were introduced and analyzed in~\cite{stabilizer_formalism_operator, operator_quantum_error}. Some subsystem codes such as the Bacon-Shor code family \cite{operator_quantum_error,AC:baconshor} are characterized by a stabilizer group which has non-local generators. These codes, although interesting, are not believed to have an asymptotically non-vanishing threshold, unlike the topological subsystem codes introduced by Bombin \cite{bombin:topsub}.

After briefly introducing some stabilizer and subsystem code notation in Section~\ref{sec:background}, Section~\ref{sec:tsc} discusses ways of constructing subsystem codes with interesting properties. This method includes Bombin's topological subsystem codes, but can also give rise to new families of codes. Section~\ref{subs:properties} lists the desired code properties. Section~\ref{subs:honeycomb} sets the stage for our results by considering a `trivial' topological subsystem code based on Kitaev's honeycomb model (this code has no logical qubits). This code serves as a template for all code constructions introduced later. It also provides us with insights in how and under what assumptions the syndrome measurement can be implemented by measuring 2-body gauge group generators, see Section~\ref{subs:honeycomb} and Appendix~A. We will define extensions of the honeycomb model to $3$-valent hypergraphs and explain how these extensions can be used to derive topological subsystem codes, see Sections~\ref{subs:loops} and~\ref{subs:loops2gauge}. Our general construction is illustrated with two specific examples in Section~\ref{subs:examples} where we define the square-octagon code and the five-squares code. Finally, in Section~\ref{Decoding_algorithm} we describe a possible decoding algorithm for the five-squares code. We evaluate the performance of the decoding algorithm in Section~\ref{Experiments}.

\section{Background}
\label{sec:background}
Quantum error correcting codes \emph{encode} quantum information into some subspace $C$, --the code space--, of a larger Hilbert space. After the encoded information is subjected to noise, a \emph{syndrome measurement} is performed to diagnose the error that occurred. Finally, the syndrome information is used to return the code into a state that corresponds to the originally encoded information.

\subsection{Stabilizer Codes}
\label{subs:stabilizer}
We consider a system comprised of $n$ qubits. Let $X_j$, $Y_j$, and $Z_j$ represent the Pauli operators acting on the $j$-th qubit. Let $\mathcal{P}_n = \langle i\, I,X_1,Z_1,...,X_n,Z_n \rangle$ be the Pauli group on $n$ qubits. A stabilizer group $\mathcal{S}$  is an Abelian subgroup of $\mathcal{P}_n$ which does not contain $-I$. One can always choose independent generators $S_1, S_2,\ldots S_{n-k}\in \calS$ for some $k\le n$.

The codespace $C$ is the $2^{k}$ dimensional space stabilized by $\mathcal{S}$, i.e., the subspace of states $|\psi\rangle$ for which $S_j|\psi\rangle= |\psi\rangle$ holds for all $j=1,\ldots,n-k$. Such stabilizer code encodes $k$ logical qubits whose logical operators are denoted as $\{\bar{X}_j,\bar{Z}_j\}_{j=1,...,k}$. The centralizer  $\mathcal{C}(\mathcal{S})$ in ${\cal P}_n$ includes all Pauli operators in $\mathcal{P}_n$ that commute with every element of $\mathcal{S}$. For any stabilizer code one has the identity ${\cal C}(\mathcal{S})=\langle {\cal S}, \bar{X}_1,\bar{Z}_1,\ldots, \bar{X}_{k},\bar{Z}_{k}\rangle$. Hence the logical operators of the code are elements of  $\mathcal{C}(\mathcal{S}) \setminus \mathcal{S}$.

\subsection{Subsystem Codes}
\label{subs:subsystem}
Subsystem codes are stabilizer codes in which some logical qubits do not encode any information. The presence of these unused logical qubits a.k.a. {\em gauge qubits} often allows one to simplify encoding and decoding circuits since the initial and final state of the gauge qubits is irrelevant. A subsystem code derived from a stabilizer code $\calS$ is fully characterized by its {\em gauge group} $\calG$ that includes all stabilizers $\calS$ and all logical operators of $\calS$ acting only on the gauge qubits. The gauge group uniquely defines the stabilizer group (up to overall phase factors) such that $\la i\, I, \calS\ra =\calG\cap \calC(\calG)$. We shall often ignore the overall phase factors and use a simpler identity $\calS=\calG\cap \calC(\calG)$. A subsystem code has two species of logical operators. {\em Bare } logical operators are the logical operators of $\calS$ that act trivially on the gauge qubits. Multiplying bare logical operators by elements of $\calG$ one obtains {\em dressed} logical operators. Non-trivial bare and dressed logical operators are elements of $\calC(\calG)\backslash \calG$ and $\calC(\calS)\backslash \calG$ respectively. Errors are detected and corrected by measuring the gauge group generators $G_i$ and deducing the $\pm 1$ eigenvalues of the generators of ${\cal S}$ (the syndrome bits) from these measurements. These syndrome bits are then used to correct errors modulo operators in the gauge group ${\cal G}$ (which do not affect the encoded information), see Section~\ref{subs:honeycomb} and Appendix~A for more details.

\section{Topological Subsystem Codes}
\label{sec:tsc}

\subsection{Desired Code Properties}
\label{subs:properties}
The purpose of this section is to introduce a toolbox that allows one to design new 2D topological subsystem codes. The physical qubits of such codes occupy sites of a two-dimensional lattice, whereas the gauge group is generated by spatially local two-qubit generators. We begin by introducing some notation and describing desired code properties. Let $\calL=(V,E)$ be the underlying lattice embedded into a plane or a torus with a set of sites $V$ and a set of links (edges) $E$. Each site $u\in V$ represents a physical qubit. Pauli operators acting on $u$ will be denoted $X_u,Y_u$, and $Z_u$. Each link $(u,v)\in E$ represents a gauge group generator $K_{u,v}\in \calP_n$ acting on a pair of qubits $u,v$. We shall refer to the generators $K_{u,v}$ as {\em link operators}. For example, $X_u X_v$, $Y_u Y_v$, and $X_u Y_v$ are admissible link operators. For simplicity let us assume that the link operators are hermitian, so that $K_e^2=I$ for any link $e\in E$. Link operators generate a gauge group
\[
\calG=\la K_e\,  \: \, e\in E \ra,
\]
which in turn specifies a subsystem code as described in Section~\ref{subs:subsystem}. Let us  say that a Pauli operator $P\in \calP_n$ is {\em spatially local} iff the support of $P$ is confined to a disk of radius $O(1)$. Our goal is to construct subsystem codes combining the following properties:
\begin{enumerate}
\item[\bf (C1)] The stabilizer group $\calS=\calG\cap \calC(\calG)$ has spatially local generators. Elements of $\calS$ can be identified with closed homologically trivial loops on the lattice.
\item[\bf (C2)] Syndrome extraction can be implemented by measuring eigenvalues of the two-qubit  link operators.
\item[\bf (C3)] The code encodes one or more logical qubits. Bare logical operators (in ${\cal C}({\cal G})\backslash {\cal G}$) can be identified with closed homologically non-trivial loops.
\end{enumerate}
Properties (C1) and (C3) are often used to describe topological quantum order in the framework of stabilizer codes, see for instance~\cite{BHM:topo}. Property (C2) is the main advantage of subsystem codes compared to topological stabilizer codes such as the surface code. It says that the syndrome extraction requires only two-qubit measurements applied to nearest-neighbor sites of the lattice. In contrast, the syndrome extraction for all known topological stabilizer codes requires at least four-qubit measurements (it seems hard if not impossible to construct a nontrivial stabilizer code with 3- or 2-body, spatially local, commuting generators). We shall refer to subsystem codes satisfying conditions (C1-3) as {\em topological subsystem codes}.

\subsection{Kitaev's Honeycomb Model on $3$-valent Graphs}
\label{subs:honeycomb}
Our starting point is Kitaev's honeycomb model~\cite{Kitaev05} and its generalization to arbitrary $3$-valent graphs. It will provide us with a supply of `trivial' subsystem codes that satisfy conditions (C1) and (C2) but do not encode any logical qubits. These trivial codes will serve as a template for more sophisticated code constructions introduced later.

Recall that the honeycomb model~\cite{Kitaev05} is described by a Hamiltonian
\bea
H&=&-J_x \sum_{(u,v)\in E_x} X_u X_v -  J_y \sum_{(u,v)\in E_y} Y_u Y_v \nonumber \\
&& - J_z \sum_{(u,v)\in E_z} Z_u Z_v \label{kitaev}
\eea
where qubits occupy sites of the honeycomb lattice $\calL=(V,E)$  embedded into a torus with three species of edges, $E=E_x \cup E_y \cup E_z$, as illustrated in Figure~\ref{fig:honeycomb}. We shall regard the terms in the Hamiltonian Eq.~(\ref{kitaev}) as link operators of a subsystem code, that is,
\be
\label{kitaev1}
K_{u,v}=\left\{ \ba{rcl}
X_u X_v &\mbox{if} & (u,v)\in E_x, \\
Y_u Y_v &\mbox{if} & (u,v)\in E_y, \\
Z_u Z_v &\mbox{if} & (u,v)\in E_z. \\
\ea
\right.
\ee
\begin{figure}[t!]
\centering
\includegraphics[height=2cm]{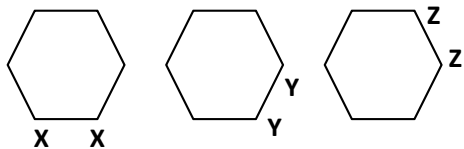}
\caption{Three types of link operators in the Kitaev's honeycomb model. This definition can be extended to the entire lattice using the translation invariance.}
\label{fig:honeycomb}
\end{figure}
Note that the link operators are not independent. Since each site has exactly three incident links contributing Pauli operators $X,Y$, and $Z$ acting on this site, one has an identity
\be
\label{kitaev2}
\prod_{e\in E} K_e \sim I.
\ee
Furthermore, a pair of  link operators $K_e$, $K_{e'}$ anticommutes iff the links $e$ and $e'$ share exactly one end-point and commute in all remaining cases, that is,
\be
\label{d3com}
K_e K_{e'} =(-1)^{\eta(e,e')}\, K_{e'} K_e \quad \mbox{for all $e,e'\in E$},
\ee
where
\[
\eta(e,e')=\left\{ \ba{rcl} 1 &\mbox{if} & \mbox{$e$ and $e'$ share exactly one site}, \\
0 && \mbox{otherwise}. \\
\ea
\right.
\]
Consider now  a {\em loop operator} $W(\gamma)=\prod_{e\in \gamma} K_e$ associated with any closed loop $\gamma \subseteq E$ on the lattice. Using Eq.~(\ref{d3com}) one can easily check that $W(\gamma)$ commutes with all link operators.  Thus, if we consider a subsystem code with a gauge group $\calG=\la K_e, \quad e\in E\ra$, then the loop operators are stabilizers of $\calG$, that is, $W(\gamma)\in \calS=\calG \cap \calC(\calG)$ for any closed loop $\gamma$. In fact, as we prove below (see Lemma~\ref{lemma:d3}) the stabilizer group $\calS$ is generated by the loop operators. It shows that the honeycomb model provides a natural supply of codes satisfying condition (C1).

Let us now discuss how the syndrome measurement can be implemented by measuring eigenvalues of individual link operators. Consider first the extraction of a single syndrome bit corresponding to some stabilizer $S\in \calS$. Suppose we can represent $S$ as an ordered product of link operators
\be
\label{syndrom0}
S=K_{m} \cdots K_{2} K_{1} \quad \mbox{where $K_j\equiv K_{e_j}$},
\ee
such that each $K_j$ commutes with the product of all link operators that appear before it, that is,
\be
\label{syndrom1}
[K_j,K_{j-1} \cdots K_1 ]=0 \quad \mbox{for all $j=2,\ldots,m$}.
\ee
Note that Eq.~(\ref{syndrom1}) implies that $S$ is a hermitian operator since the total number of anticommuting pairs among $K_1,\ldots,K_m$ is even. In order to measure the eigenvalue $\sigma$ of $S$, one can perform the nondemolition  eigenvalue measurement of $K_1,\ldots,K_m$ in the order from the left to the right, obtaining outcomes $\gamma_1,\ldots,\gamma_m\in \{+1,-1\}$. The deduced eigenvalue of $S$ is then computed as $\sigma=\gamma_1\cdots \gamma_m$.
Let us check that the above procedure is equivalent to the nondemolition  eigenvalue measurement of $S$ modulo gauge operators. For example, if $m=2$ then Eq.~(\ref{syndrom1}) implies that $K_1$ and $K_2$ mutually commute. Accordingly, the eigenvalues of $K_1$ and $K_2$ can be measured simultaneously. Since $S=K_2 K_1$, the eigenvalue of $S$ is determined by $\sigma=\gamma_1 \gamma_2$. We can now use induction to analyze the case $m>2$. Define a truncated version of the stabilizer $S$, namely, $S_j=K_j \cdots K_1$. Our induction hypothesis is that the reduced state obtained immediately after the measurement of $K_j$ is an eigenvector of $S_j$ with an eigenvalue $\sigma_j=\gamma_1 \cdots \gamma_j$. Indeed, we have already checked this for $j=1,2$.  Since Eq.~(\ref{syndrom1}) implies that $K_{j+1}$ commutes with $S_j$,  the eigenvalue measurement of $K_{j+1}$ preserves the eigenvalue of $S_j$. Hence the eigenvalue of $S_{j+1}$ after the measurement of $K_{j+1}$ is $\sigma_{j+1}=\gamma_{j+1} \sigma_j$. In Appendix~A we prove that the existence of the decomposition Eq.~(\ref{syndrom0}) satisfying Eq.~(\ref{syndrom1}) is in fact a necessary and sufficient condition under which the eigenvalue measurement of $S$ can be simulated by the measurements of the gauge group generators. This condition applies to any subsystem code and any fixed choice of the gauge group generators.

Consider now a loop $\gamma\subseteq E$ associated with the boundary of some hexagon. Let $e_1,e_2,\ldots,e_6$ be the six consecutive links comprising $\gamma$, and $K_j\equiv K_{e_j}$ be the corresponding link operators. In order to measure eigenvalue of the loop operator $W(\gamma)$, one can use a decomposition $W(\gamma)=K_6 K_4 K_2 K_5 K_3 K_1$, which clearly satisfies Eq.~(\ref{syndrom1}). In fact, since the $K_j$ pairwise commute for all odd (even) $j$, one can measure the eigenvalue of $W(\gamma)$ in only two rounds: $K_1,K_3,K_5$ are measured in the first round, and  $K_2,K_4,K_6$ are measured in the second round. Let us point out that the above procedure breaks down for loops $\gamma$ with an {\em odd length}. In fact, one can easily check that condition Eq.~(\ref{syndrom1}) cannot be satisfied for odd loops, and thus the corresponding loop operators $W(\gamma)$ cannot be measured using only two-qubit measurements. It shows that bipartiteness of the lattice $\calL$ plays a crucial role since for a bipartite graph any closed loop has  even length.

Syndrome measurements associated with a pair of adjacent hexagons may interfere with each other since one link operator is shared by both stabilizers. However one can arrange all the hexagons in $O(1)$ layers such that hexagons from the same layer do not have common links. Now the syndrome extractions for different hexagons in the same layer do not interfere with each other and one needs only two measurement rounds per each layer. The full syndrome extraction thus requires $O(1)$ rounds of two-qubit measurements~\footnote{For simplicity we ignored the non-local stabilizers associated with homologically non-trivial loops on the lattice. Using the same arguments as above, one can measure the eigenvalue of any such stabilizer using only two rounds of two-qubit measurements.}. It shows that the honeycomb model provides a natural supply of codes satisfying conditions (C1-2). This discussion does not address completely how efficient the measurement procedure is from a practical point of view, i.e. precisely how many rounds of measurements are needed.

Unfortunately, as we demonstrate below, the honeycomb model does not satisfy condition (C3), that is, the corresponding subsystem code has no logical qubits. To set the stage for our results, it will be convenient to analyze a generalization  of the honeycomb model Eqs.~(\ref{kitaev},\ref{kitaev1}) defined on an arbitrary $3$-valent graph $\calL=(V,E)$. As before, we assume that the physical qubits live at sites $u\in V$ and the link operators $K_e$  obey  the commutation rules Eq.~(\ref{d3com}). We note that the commutation rules Eq.~(\ref{d3com}) uniquely specify the link operators up to local unitaries. Indeed, let $u\in V$ be any site and $f,g,h\in E$ be the three links incident to $u$. From Eq.~(\ref{d3com}) we infer that the link operators $K_f,K_g$, and $K_h$ act on $u$ as $X_u$, $Y_u$, and $Z_u$ up to a permutation. Conjugating $K_f,K_g$, and $K_h$ by an appropriate single-qubit Clifford unitary operator $V_u$ one can generate any permutation of $X_u$, $Y_u$, and $Z_u$. In addition, the depolarizing noise model used in this paper  assigns equal probabilities to all Pauli errors on $u$, and thus it is also invariant under permutations of $X_u$, $Y_u$, and $Z_u$. It shows that the specific form of the link operators does not play any role as long as they obey the commutation rules Eq.~(\ref{d3com}). Let $\calG=\la K_e, \; e\in E\ra$ be the gauge group generated by the link operators.
\begin{dfn}
A subset of links $M \subseteq E$ is called a cycle iff every site has an even number of incident links from $M$.
\end{dfn}
\begin{lemma}
\label{lemma:d3}
The subsystem code $\calG$ has no logical qubits, that is, $\calC(\calG)\subseteq \calG$. Furthermore, $P\in \calC(\calG)$ iff $P\sim \prod_{e\in M} K_e$ for some cycle $M\subseteq E$.
\end{lemma}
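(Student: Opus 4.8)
The plan is to reduce the whole statement to local, site-by-site bookkeeping of commutation relations, using the fact established just above that the three link operators incident to a site $u$ act on $u$ as $X_u,Y_u,Z_u$ up to a permutation. For an arbitrary Pauli operator $P$, write $P_u$ for the single-qubit Pauli that $P$ induces on site $u$, and for each incidence of a link $e$ with a site $u$ set $c_u(e)=0$ if $P_u$ commutes with the restriction $(K_e)_u$ and $c_u(e)=1$ otherwise. The central local observation is that, since the three link operators at $u$ restrict to $X_u,Y_u,Z_u$ in some order, the triple of values $c_u(e)$ over the three incident links is either $(0,0,0)$ when $P_u=I$, or a permutation of $(0,1,1)$ when $P_u\neq I$. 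In either case $\sum_{e\ni u}c_u(e)$ is even, and moreover when it equals $2$ the unique incident link $e$ with $c_u(e)=0$ determines $P_u$ as the Pauli type of that link. Finally, $P$ commutes with a link operator $K_e$, $e=(u,v)$, precisely when $c_u(e)=c_v(e)$.

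Next I would use this to extract a cycle. For $P\in\calC(\calG)$ the equality $c_u(e)=c_v(e)$ holds on every link, so $c(e):=c_u(e)=c_v(e)$ is a well-defined edge label, and I set $M_P=\{e\in E : c(e)=1\}$. By the parity statement above, every site is incident to an even number of links of $M_P$, i.e.\ $M_P$ is a cycle. This assigns a cycle to each element of the centralizer.

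I would then prove the inclusion $\calC(\calG)\subseteq\calG$ and the ``only if'' part of the second claim simultaneously by showing $P\sim W(M_P)=\prod_{e\in M_P}K_e$. Comparing $Q=P\,W(M_P)$ site by site: at a site $u$ incident to no link of $M_P$ the parity fact forces $P_u=I$ and $(W(M_P))_u=I$; at a site incident to exactly two links $g,h\in M_P$ one has $P_u$ equal to the Pauli type of the remaining incident link, while $(W(M_P))_u=(K_g)_u(K_h)_u$ is proportional to that same Pauli, since the product of two distinct elements of $\{X,Y,Z\}$ is proportional to the third. Hence $Q_u\sim I$ at every site, so $Q\sim I$ and $P\sim W(M_P)\in\calG$. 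This yields $\calC(\calG)\subseteq\calG$ (no logical qubits) and shows every centralizer element is a cycle operator up to phase.

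For the converse I would check directly that $W(M)\in\calC(\calG)$ for any cycle $M$. For a link $e'=(a,b)$, moving $K_{e'}$ through the product yields a sign $(-1)^{s}$ with $s=\sum_{e\in M}\eta(e,e')$ equal to the number of links of $M$ meeting $e'$ in exactly one endpoint; writing $d_a,d_b$ for the numbers of $M$-links incident to $a$ and $b$ (both even, as $M$ is a cycle) and $|C|$ for those joining $a$ to $b$, this count equals $d_a+d_b-2|C|$, which is even, so $W(M)$ commutes with every $K_{e'}$. The main obstacle is the local-to-global phase bookkeeping, and in particular establishing that $\sum_{e\ni u}c_u(e)$ is even, which is exactly what upgrades the analytic centralizer condition into the combinatorial cycle condition; the sign computation in the converse is then routine once the even-degree property of cycles is used.
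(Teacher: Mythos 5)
Your proof is correct and follows essentially the same route as the paper's: your edge labels $c_u(e)$ are the paper's notion of a link ``locally anticommuting'' with $P$, the parity observation at each $3$-valent site is exactly the paper's argument that the anticommuting set is a cycle, and your identification $P\sim W(M_P)$ together with the sign count showing $W(M)\in\calC(\calG)$ mirrors the paper's comparison of $P$ with $P'=\prod_{e\in M}K_e$. The only cosmetic difference is that you verify $P\cdot W(M_P)\sim I$ by matching Pauli types site by site, whereas the paper argues that this product has trivial anticommuting set and therefore commutes with every single-qubit Pauli.
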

\begin{proof}
It suffices to prove the second statement of the lemma. Consider any operator $P\in \calC(\calG)$. We shall say that $P$ {\em locally anticommutes} with a link operator $K_{u,v}$ iff
\[
P_u K_{u,v} = - K_{u,v} P_u \quad \mbox{and} \quad
P_v K_{u,v} = - K_{u,v} P_v.
\]
The subset $M\subseteq E$ including  all links locally anticommuting with $P\in \calC(\calG)$ will be called an {\em anticommuting set} of $P$. We claim that the anticommuting set $M$ is a cycle, that is, any site has even number of incident links from $M$. Indeed, Eq.~(\ref{d3com}) implies that for any site $u$ the product of $K_e$ over the three links $e$ incident to $u$ acts trivially on $u$. Hence $P$ can locally anticommute only with even number of $K_e$ among the three links $e$ incident to $u$. It shows that $M$ is a cycle. Consider now an operator
\[
P'=\prod_{e\in M} K_e.
\]
Using the fact that $M$ is a cycle and Eq.~(\ref{d3com}), one can easily check that $P'\in \calC(\calG)$. Furthermore, the anticommuting set of $P'$ is $M$. Hence $PP'$ has trivial anticommuting set, that is, $PP'$ locally commutes with any link operator $K_e$. It means that $PP'$ commutes with any single-qubit Pauli operator, which is possible only if $PP'\sim I$. Hence $P\sim P'$.
\end{proof}

\subsection{Loop Operators on $3$-valent Hypergraphs}
\label{subs:loops}
As we demonstrated in the previous section, the honeycomb model and its generalizations to $3$-valent graphs give rise to trivial subsystem codes encoding no logical qubits. In particular, any pair of loop operators commute with each other regardless of the homological class of the loops. Hence the codes derived from the honeycomb model cannot satisfy condition (C3). To overcome this problem, it is helpful to change the perspective and view as fundamental objects the loop operators rather than the link operators. Using the ideas from the previous section, we shall define a group of loop operators $\calG_{loop}$ associated with any $3$-valent hypergraph $\calL=(V,E)$ in which edges represent pairs and triples of sites. The latter type of edges  gives rise to anticommuting pairs of loop operators which could be used to encode logical qubits. We shall see that a pair of loop operators anticommute (commute) iff the two loops share an odd (even) number of triangles, see Lemma~\ref{lemma:loopCR}. Once the group of loop operators is defined, the gauge group $\calG$ of a subsystem code and its stabilizer group $\calS=\calG\cap \calC(\calG)$  can be reconstructed using identities
\be
\label{loops1}
\calG=\calC(\calG_{loop}) \quad \mbox{and} \quad \calS=\calG_{loop}\cap \calC(\calG_{loop}).
\ee
We will prove that $\calG$ has two-qubit generators for all hypergraphs $\calL$ obeying certain mild restrictions. The main advantage of this top-bottom approach is that once a suitable group of loop operators is chosen, we can easily find bare logical operators $\bar{X}_i,\bar{Z}_i$ and generators of the stabilizer group $\calS$. Indeed, Eq.~(\ref{loops1}) implies that
\be
\label{loops2}
\calC(\calG)=\calG_{loop},
\ee
that is, the bare logical operators $\bar{X}_i,\bar{Z}_i\in \calC(\calG)$ can be chosen as pairs of loop operators obeying the standard Pauli commutation rules, while the stabilizers are the loop operators commuting with all other loop operators. In contrast, deriving logical operators and stabilizers starting from the link operators may require rather lengthy calculations and does not give much insight into the topological nature of the code.

As before, we consider a system of $n$ physical qubits occupying a set of sites $V$ embedded into a plane or a torus. To define a subsystem code, we shall identify $V$ with a set of vertices of a hypergraph $\calL=(V,E)$ with two species of edges: edges connecting pairs of sites and edges connecting triples of sites. For consistency with the previous notations, we shall refer to the edges of the first type as {\em links} and the edges of the second type as {\em triangles}. Thus $E=E_2\cup E_3$, where $E_2$ is a set of links and $E_3$ is a set of triangles. More specifically, a link $(u,v)\in E_2$ is an unordered pair of sites and a triangle $(u,v,w)\in E_3$ is an unordered triple of sites. We shall say that an edge $e$ is incident to a site $u$ iff $e=(u,v)$ or $e=(u,v,w)$. There will be only three restrictions imposed on the hypergraph $\calL$, namely,
\begin{itemize}
\item Each site has exactly three incident edges,
\item Any pair of edges share at most one site,
\item All triangles are pairwise disjoint.
\end{itemize}
We show examples of such hypergraphs $\calL$ in Figures~\ref{fig:so},\ref{fig:s5}. A different example is the triangle-honeycomb lattice studied in Ref.~\cite{yao_kivelson} but for this decorated honeycomb model $\calG_{loop}$ is trivial. By analogy with the honeycomb model, we consider a family of {\em edge operators} $K_e$ living on edges $e\in E$ and satisfying the commutation rules Eq.~(\ref{d3com}). We shall refer to operators $K_e$ as {\em link operators} if $e\in E_2$ and as {\em triangle operators} if $e\in E_3$. The same arguments as above show that the edge operators are uniquely defined by the commutation rules Eq.~(\ref{d3com}) up to local unitaries. Also the commutation rules Eq.~(\ref{d3com})  and our assumptions on $\calL$ imply that the triangle operators are three-qubit Pauli operators, for example, $K_{u,v,w}=Z_u Z_v Z_w$.
\begin{dfn}
A subset of edges $M\subseteq E$ is called a cycle iff every site has even number of incident edges from $M$.
\end{dfn}
Our assumptions on $\calL$ imply that any cycle $M$ consist of a disjoint union of loops, open paths, and triangles  such that free ends of every open path belong to some triangle, and every triangle $(u,v,w)\in M$ has exactly three open paths terminating at $u,v$, and $w$. See Figures~\ref{fig:so},\ref{fig:s5} for examples of a cycle. Given any cycle $M$, the operator
\be
\label{loop_operator}
W(M)=\prod_{e\in M} K_e
\ee
will be referred to as a {\em loop operator} associated with $M$. In contrast to the standard honeycomb model of Section~\ref{subs:honeycomb}, the loop operators $W(M)$ may anticommute with each other.
\begin{lemma}
\label{lemma:loopCR}
Let $e\in E$ be any edge and $M\subseteq E$ be any cycle. The loop operator $W(M)$ anticommutes with $K_e$ if $e$ is a triangle contained in $M$, i.e, $e\in M\cap E_3$, and commutes otherwise. Furthermore, for any cycles  $M,M'\subseteq E$ one has
\be
\label{loopCR}
W(M) W(M')=(-1)^{\Delta(M,M')}\, W(M') W(M),
\ee
where $\Delta(M,M')$ is the number of triangles shared by $M$ and $M'$.
\end{lemma}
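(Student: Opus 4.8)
The plan is to reduce both claims to a single parity count governed by Eq.~(\ref{d3com}). Since each edge operator either commutes or anticommutes with $K_e$, commuting $K_e$ through the product $W(M)=\prod_{f\in M}K_f$ one factor at a time produces an overall sign $(-1)^{N(e,M)}$, where
\[
N(e,M)=\sum_{f\in M}\eta(e,f)
\]
counts the edges $f\in M$ sharing exactly one site with $e$. Hence $W(M)$ anticommutes with $K_e$ precisely when $N(e,M)$ is odd, and the entire first statement reduces to determining the parity of $N(e,M)$ from the combinatorics of $\calL$.

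First I would record the structural facts forced by the restrictions on $\calL$. Because any two edges share at most one site, no edge $f\neq e$ is incident to more than one endpoint of $e$; consequently the edges of $M$ contributing to $N(e,M)$ are partitioned according to which endpoint of $e$ they meet, with no double counting. Writing $d_M(u)$ for the number of edges of $M$ incident to a site $u$, this lets me express $N(e,M)$ as the sum of $d_M(u)$ over the endpoints of $e$, minus a correction for $e$ itself whenever $e\in M$. The cycle condition supplies the crucial input that every $d_M(u)$ is even.

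I would then run the four resulting cases. For a link $e=(u,v)$ one gets $N(e,M)=d_M(u)+d_M(v)$ if $e\notin M$ and $N(e,M)=d_M(u)+d_M(v)-2$ if $e\in M$; both are even, so $W(M)$ commutes with every link operator. For a triangle $e=(u,v,w)$ one gets $N(e,M)=d_M(u)+d_M(v)+d_M(w)$ if $e\notin M$, again even, but the self-correction when $e\in M$ is $-3$ rather than $-2$, giving an \emph{odd} value $d_M(u)+d_M(v)+d_M(w)-3$. Thus the parity flips exactly when $e$ is a triangle lying in $M$, which is the first claim of the lemma.

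The commutation formula Eq.~(\ref{loopCR}) then follows immediately: expanding $W(M')=\prod_{e\in M'}K_e$ and moving $W(M)$ past the factors one by one, the accumulated sign is the product of the individual signs from the first part, i.e.\ $-1$ raised to the number of $e\in M'$ with $e\in M\cap E_3$. This is precisely the number $\Delta(M,M')$ of triangles common to $M$ and $M'$. I expect the only delicate point to be the bookkeeping in the $e\in M$ cases: one must verify that removing the self-term contributes $-2$ for a link but $-3$ for a triangle---one unit per endpoint---since it is exactly this odd/even discrepancy that produces the anticommuting loop operators that were absent in the honeycomb model of Lemma~\ref{lemma:d3}.
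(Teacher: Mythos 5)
Your proof is correct and follows essentially the same route as the paper: both reduce the claim to the parity of $\sum_{f\in M}\eta(e,f)$ and exploit the fact that a cycle gives every site an even number of incident $M$-edges, so only the self-term of $e$ can flip the parity. Your explicit $-2$ versus $-3$ self-correction is just the combinatorial form of the paper's $\chi_3(e)\,\delta_{e,e'}$ term in its identity for $\eta$, and the deduction of Eq.~(\ref{loopCR}) from the first part is identical.
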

\begin{proof}
Let $A_u(e)$ be the incidence matrix of $\calL$, that is, $A_u(e)=1$ if $e$ is an edge incident to $u$ and $A_u(e)=0$ otherwise. Recall that $K_e K_{e'} =(-1)^{\eta(e,e')}\, K_{e'} K_e$, where $\eta(e,e')=1$ iff $e$ and $e'$ share exactly one site and $\eta(e,e')=0$ otherwise. One can easily check that for any pair of edges $e,e'\in E$ one has the following identity:
\be
\label{eta1}
\eta(e,e')=\chi_3(e) \delta_{e,e'} + \sum_{u\in e} A_u(e') \pmod{2},
\ee
where $\chi_3(e)=1$ if $e$ is a triangle and $\chi_3(e)=0$ if $e$ is a link. For any site $u$ and for any cycle $M$ one has
\[
\sum_{e'\in M} A_u(e') = 0 \pmod{2}.
\]
Taking into account Eq.~(\ref{eta1}) one arrives at $K_e W(M) =(-1)^{\alpha(e)} W(M) K_e$, where
\bea
\alpha(e)&=&\sum_{e'\in M} \eta(e,e') \pmod{2} \nonumber \\
&=& \sum_{e'\in M} \chi_3(e) \delta_{e,e'} + \sum_{u\in e} A_u(e') \pmod{2}\nonumber \\
&=& \chi_3(e) \sum_{e'\in M}\delta_{e,e'} \pmod{2}. \nonumber
\eea
Hence $K_e$ anticommutes with $W(M)$ if $e\in M$ is a triangle and commutes otherwise. It immediately implies Eq.~(\ref{loopCR}).
\end{proof}
It is worth pointing out that the number of triangles in any cycle must be even, so Eq.~(\ref{loopCR}) is consistent with the fact that $W(M)$ commutes with itself. Indeed, each open path in $M$ contributes two sites that must be covered by triangles in $M$. Since each triangle covers exactly three sites, the number of triangles must be even.

\subsection{Reconstructing the Gauge Group from the Loop Operators}
\label{subs:loops2gauge}

Recall that our goal is to represent the group of loop operators $\calG_{loop}$ as a centralizer of the gauge group $\calG$ which has  two-qubit generators, see Eqs.~(\ref{loops1},\ref{loops2}). Let us first explain how to transform the hypergraph $\calL=(V,E)$ into an ordinary graph $\calL'=(V,E')$ which will be the underlying lattice for our subsystem code. The graph $\calL'$ has the same set of sites $V$ representing locations of the physical qubits. Recall that $E=E_2\cup E_3$, where $E_2$ is the set of links and $E_3$ is the set of triangles. Every link $(u,v)\in E_2$ becomes a link of the graph $\calL'$, that is, we have the inclusion $E_2\subseteq E'$. Furthermore, every triangle $(u,v,w)\in E_3$ contributes a triple of links to the graph $\calL'$, namely the links $(u,v)$, $(v,w)$, and $(u,w)$. In other words, the graph $\calL'$ is obtained from the hypergraph $\calL$ by breaking up all triangles into the individual links. To avoid confusions between various species of links and edges, we shall refer to the links of $\calL'$ originating from links of $\calL$ as {\em solid links}. The links of $\calL'$ originating from triangles of $\calL$ will be referred to as {\em dashed links}. Hence $E'=E_s\cup E_d$ where $E_s\equiv E_2$ and $E_d$ are the sets of solid and dashed links. The solid and dashed links will be rendered properly on all figures.

Now we are ready to define the gauge group $\calG$. As was mentioned above, the commutation rules Eq.~(\ref{d3com}) uniquely specify the edge operators  $K_e$ up to local unitaries. It will be convenient to fix the choice of these local unitaries such that all triangle operators include only Pauli $Z$, that is,
\be
\label{triangle1}
K_{u,v,w} = Z_u Z_v Z_w \quad \mbox{for all $(u,v,w)\in E_3$}.
\ee
We define the set of link operators $K_e'$, $e\in E'$ generating the gauge group $\calG$ as follows:
\be
\label{K'}
\ba{rcll}
K_{u,v}' &=& K_{u,v} & \mbox{if $(u,v)$ is a solid link}, \\
K_{u,v}' &=& Z_u Z_v &\mbox{if $(u,v)$ is a dashed link}.\\
\ea
\ee
In other words, each triangle operator $K_{u,v,w}$ gives rise to a triple of dashed link operators obtained by restricting $K_{u,v,w}$ on pairs of qubits $(u,v)$, $(v,w)$, and $(u,w)$. (Note that these dashed link operators are not independent since $K_{u,v}' K_{v,w}' K_{u,w}'=I$). The main result of this section is the following lemma.
\begin{lemma}
Let $\calG$ be the  group generated by solid and dashed link operators $K_e'$, $e\in E'$. Then $\calG_{loop}=\calC(\calG)$.
\end{lemma}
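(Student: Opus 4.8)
The plan is to prove the two inclusions $\calG_{loop}\subseteq \calC(\calG)$ and $\calC(\calG)\subseteq \calG_{loop}$ separately, using a refinement of the anticommuting-set argument from Lemma~\ref{lemma:d3}. Throughout I will use the observation that products of loop operators are again loop operators: since the symmetric difference of two cycles is a cycle and $K_e^2=I$, one has $W(M)W(M')\sim W(M\triangle M')$, so every element of $\calG_{loop}$ equals some single $W(M)$ up to a phase. The one computation that drives everything is the local action of $W(M)$ at a site $u$: because $M$ is a cycle, exactly $0$ or $2$ edges of $M$ are incident to $u$, so $W(M)$ acts on $u$ either trivially or as the product of the two incident edge-Paulis, i.e.\ as the unique third Pauli among $\{X_u,Y_u,Z_u\}$. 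I would record, at each site, precisely which single-qubit Pauli $W(M)$ induces.

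For $\calG_{loop}\subseteq \calC(\calG)$, a solid generator is $K'_{u,v}=K_{u,v}$, and Lemma~\ref{lemma:loopCR} already gives that $W(M)$ commutes with every link operator. For a dashed generator $K'_{u,v}=Z_uZ_v$ coming from a triangle $t=(u,v,w)$, I would combine the local computation above with the convention $K_{u,v,w}=Z_uZ_vZ_w$: the two non-triangle edges at $u$ act as $X_u,Y_u$ and the triangle acts as $Z_u$, so a short case check shows $W(M)$ anticommutes with $Z_u$ at $u$ if and only if $t\in M$. Since the same triangle $t$ contains both $u$ and $v$, $W(M)$ anticommutes with $Z_uZ_v$ at an even number ($0$ or $2$) of the two endpoints, hence commutes with it. As this covers every generator, $W(M)\in\calC(\calG)$.

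For the reverse inclusion I would, given $P\in\calC(\calG)$, build a subset $M=M(P)\subseteq E$ directly from the local anticommutation pattern of $P$: a solid link $f$ is placed in $M$ iff $P$ locally anticommutes with $K_f$, and a triangle $t=(u,v,w)$ is placed in $M$ iff $P$ anticommutes with $Z_u$ (equivalently $P_u\in\{X_u,Y_u\}$). Two things must be checked. Consistency of the triangle rule is exactly where the dashed generators enter: commuting with $Z_uZ_v$ and $Z_vZ_w$ forces $P_u,P_v,P_w$ to lie simultaneously in $\{X,Y\}$ or in $\{I,Z\}$, so the criterion does not depend on which vertex one reads it at. That $M$ is a genuine cycle then follows from a parity count at each site: a non-triangle site has its three solid generators acting as $X,Y,Z$, and every single-qubit Pauli anticommutes with exactly $0$ or $2$ of them; a triangle site has generators acting as $X,Y,Z,Z$ (two solid, two dashed), and the triangle inclusion rule is defined precisely to restore even incidence there.

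Finally I would verify that the construction inverts the loop map, i.e.\ $M(W(M))=M$ for every cycle $M$, and that the anticommuting set is additive, $M(PQ)=M(P)\triangle M(Q)$ for $P,Q\in\calC(\calG)$, which is immediate since both the solid-link indicators and the ``anticommutes with $Z_u$'' indicator compose by XOR. Granting these, the product $Q=P\,W(M(P))$ has empty anticommuting set; but $M(Q)=\emptyset$ forces $Q_u$ to commute with $X_u$ and $Y_u$ at every site, hence $Q_u=I_u$, so $Q\sim I$ and $P\sim W(M(P))\in\calG_{loop}$. The step I expect to be most delicate is checking $M(W(M))=M$ at triangle sites together with the cycle/consistency bookkeeping there, since that is where the non-$3$-valent structure of $\calL'$ and the redundancy $K'_{u,v}K'_{v,w}K'_{u,w}=I$ among dashed generators must be reconciled; the non-triangle sites reduce essentially verbatim to the argument of Lemma~\ref{lemma:d3}.
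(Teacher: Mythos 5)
Your proposal is correct and follows essentially the same route as the paper: the forward inclusion is checked generator-by-generator (solid links via Lemma~\ref{lemma:loopCR}, dashed links via the local action of $W(M)$ at triangle sites under the convention $K_{u,v,w}=Z_uZ_vZ_w$), and the reverse inclusion uses the anticommuting-set-is-a-cycle argument of Lemma~\ref{lemma:d3}, concluding from the triviality of the anticommuting set of $P\cdot W(M)$ that $P\sim W(M)$. Your extra bookkeeping ($M(W(M))=M$ and additivity of the anticommuting set) makes explicit what the paper dispatches as ``a direct inspection,'' but it is the same proof.
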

\begin{proof}
Let us first prove the inclusion  $\calG_{loop}\subseteq \calC(\calG)$. We need to check that for any link $e\in E'$ and for any cycle $M$ in the hypergraph $\calL$ the link operator $K_e'$ commutes with the loop operator $W(M)$.

Suppose $e\in E_s$ is a solid link. Then $K_e'=K_e$. Since $e$ is not a triangle, $K_e$ commutes with $W(M)$, see Lemma~\ref{lemma:loopCR}.

Suppose $e=(u,v)\in E_d$ is a dashed link so that $K_{u,v}=Z_u Z_v$. Let $f=(u,v,w)\in E_3$ be the corresponding triangle in $\calL$. Using the assumption Eq.~(\ref{triangle1}) and the commutation rules Eq.~(\ref{d3com}), we conclude that each of the sites $u,v,w$ has two incident links in $\calL$ and the corresponding link operators act on these sites by $X$ and $Y$. If $f\notin M$, each site  $u,v,w$ has either zero or two incident links from $M$. Hence $W(M)$ acts on $u,v,w$ by $I$ or $Z$. Since $K_{e}'=Z_u Z_v$, it follows that $K_e'$ commutes with $W(M)$. If $f\in M$, each site  $u,v,w$ has exactly one incident link from $M$. Hence $W(M)$ acts on $u,v,w$ by $X$ or $Y$. Again we conclude that $K_{e}'=Z_u Z_v$ commutes with $W(M)$.

It remains to prove the reverse inclusion $\calC(\calG)\subseteq \calG_{loop}$. The proof of this part is almost identical to the proof of Lemma~\ref{lemma:d3}. Consider any operator $P\in \calC(\calG)$. We claim that for any edge operator $K_e$, one has one of the two possibilities: (i) $P$ commutes with $K_e$ at every qubit in the support of $K_e$, or (ii) $P$ anticommutes with $K_e$ at every qubit in the support of $K_e$. Indeed, if $e=(u,v)\in E_2$ is a link then $K_e=K_e'\in \calG$ is a two-qubit Pauli operator commuting with $P$, that is, we have either (i) or (ii). If $e=(u,v,w)\in E_3$ is a triangle, the assumption $P\in \calC(\calG)$ means that $P$ commutes with $Z_u Z_v$, $Z_v Z_w$, and $Z_u Z_w$. Since $K_e=Z_u Z_v Z_w$, we have either (i) or (ii). We shall say that $P$ {\em locally commutes} with $K_e$ in the case~(i) and {\em locally anticommutes} with $K_e$ in the case~(ii). The subset $M\subseteq E$ including all edges locally anticommuting with $P\in \calC(\calG)$ will be called the {\em anticommuting set} of $P$.

We claim that the anticommuting set $M$ is a cycle, that is, any site $u$ has even number of incident edges from $M$. Indeed,  consider any site $u$ and let $f,g,h\in E$ be the triple of edges incident to $u$. Using  Eq.~(\ref{d3com}) and our assumptions on the hypergraph $\calL$, we  conclude that $K_f,K_g,K_h$ acts on $u$ as $X$, $Y$, $Z$ up to a permutation. It follows that $P_u$ anticommutes with even number of operators $K_f,K_g,K_h$. The observation made above shows that $P$ locally anticommutes with even number of $K_f,K_g,K_h$, that is, $M$ is a cycle.

Consider a loop operator $W(M)=\prod_{e\in M} K_e\in \calG_{loop}$. The inclusion $\calG_{loop}\subseteq \calC(\calG)$ implies $W(M)\in \calC(\calG)$. A direct inspection shows that the anticommuting set of $W(M)$ is $M$. Hence $P\cdot W(M)$ has trivial anticommuting set, that is, $P\cdot W(M)$ locally commutes with any edge operator $K_e$. It means that $P\cdot W(M)$ commutes with any single-qubit Pauli operator, that is, $P\cdot W(M)\sim I$ and $P\sim W(M)$. Hence $P\in \calG_{loop}$.
\end{proof}

\subsection{Examples of $3$-valent Hypergraphs}
\label{subs:examples}
In this section, we consider several explicit examples of $3$-valent hypergraphs and find the corresponding group of loop operators $\calG_{loop}$. As was argued above, the group $\calG_{loop}$ alone determines all properties of the code, so we shall not need the explicit form of the gauge generators.

Throughout this section we shall use a notation $\calZ(\calL)$ for the binary cycle space of a hypergraph $\calL=(V,E)$. In other words, elements of $\calZ(\calL)$  are cycles $M\subseteq E$ and the addition of cycles corresponds to the symmetric difference. Clearly, choosing generators of the group $\calG_{loop}$ is equivalent to choosing basis vectors of the binary cycle space $\calZ(\calL)$.

Let us first consider the hypergraph $\calL_{so}$ corresponding to the square-octagon lattice in Figure~\ref{fig:so}. Every site of $\calL_{so}$ has two incident links (solid lines) and one incident triangle (filled triangles). Let us first list basis cycles that have spatially local support (homologically trivial loops). Such cycles are associated with the elementary octagons and squares. Each octagon contributes two basis cycles $A,C \in \calZ(\calL_{so})$, see Figure~\ref{fig:so}. The cycle $A$ consists  of $8$ links forming the boundary of the octagon. The cycle $C$ is a loop of $8$ triangles and $12$ links encircling  the octagon (the links and triangles forming each cycle are highlighted in red in Figure~\ref{fig:so}). Analogously, each square contributes two basis cycles $B,D\in \calZ(\calL_{so})$. The cycle $D$ consists of  four links forming the boundary of the square. The cycle $B$ is a loop of $4$ triangles and $6$ links encircling  the square. Let us consider now the corresponding  loop operators, for example, $W(C)=\prod_{e\in C} K_e$. Note that each site $u$ has either $0$ or $2$ incident edges belonging to $C$. Commutation rules Eq.~(\ref{d3com}) imply that  if $u$ has two incident edges from $C$, the corresponding edge operators $K_e$ act on $u$ by different Pauli operators. It means that $W(C)$ acts non-trivially on a qubit $u$ iff $u$ has two incident edges from $C$. A direct inspection shows that $W(C)$ acts non-trivially on $24$ qubits. Similar arguments show that $W(A),W(B)$, and $W(D)$ act non-trivially on $8$, $12$, and $4$ qubits.

\begin{figure}[b!]
\centering
\includegraphics[width=.49\textwidth]{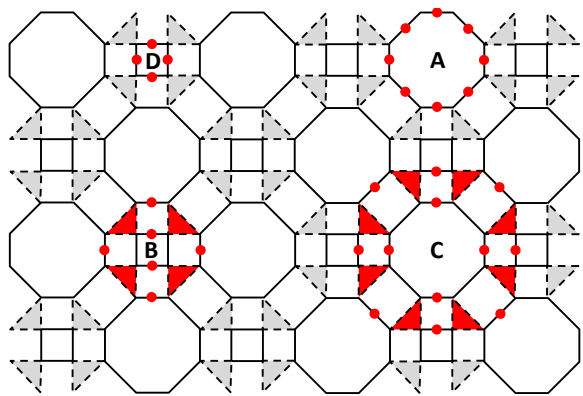}
\caption{(Color Online) The $3$-valent hypergraph $\calL_{so}=(V,E)$ associated with the square-octagon lattice embedded into a torus (periodic boundary conditions). The hypergraph has two types of edges: links (solid lines) and triangles (gray-filled and red-filled triangles). Each site has two incident links and one incident triangle. A cycle is a subset of edges $M\subseteq E$ such that each site has even number of incident edges from $M$. The hypergraph has four types of elementary cycles indicated by letters $A,B,C,D$. The links and triangles forming each cycle are highlighted in red. The underlying lattice of the subsystem code is obtained from the hypergraph $\calL_{so}$ by breaking up each triangle into a triple of links (dashed lines).}
\label{fig:so}
\end{figure}

The remaining basis cycles are those associated with the homologically non-trivial loops on the lattice. We denote these cycles $Z_1,Z_2,X_1$, and $X_2$ see Figure~\ref{fig:soZ} and Figure~\ref{fig:soX}. A simple algebra shows that the cycles $A,B,C,D$ and  $Z_1,Z_2,X_1,X_2$ form a basis of the cycle space $\calZ(\calL_{so})$.

\begin{figure}[t!]
\centering
\includegraphics[width=.49\textwidth]{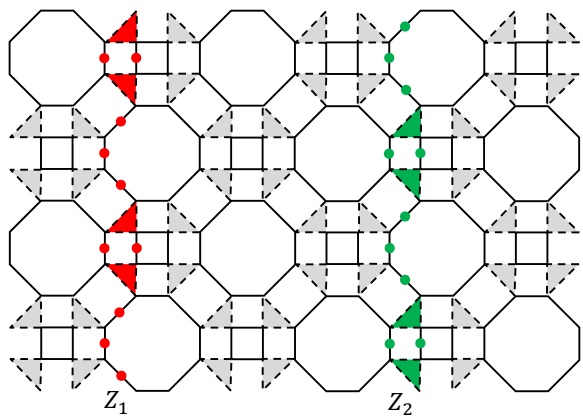}
\caption{(Color Online) Basis cycles $Z_1,Z_2\in \calZ(\calL_{so})$ corresponding to homologically non-trivial vertical loops on the lattice.}
\label{fig:soZ}
\end{figure}

\begin{figure}[h!]
\centering
\includegraphics[width=.49\textwidth]{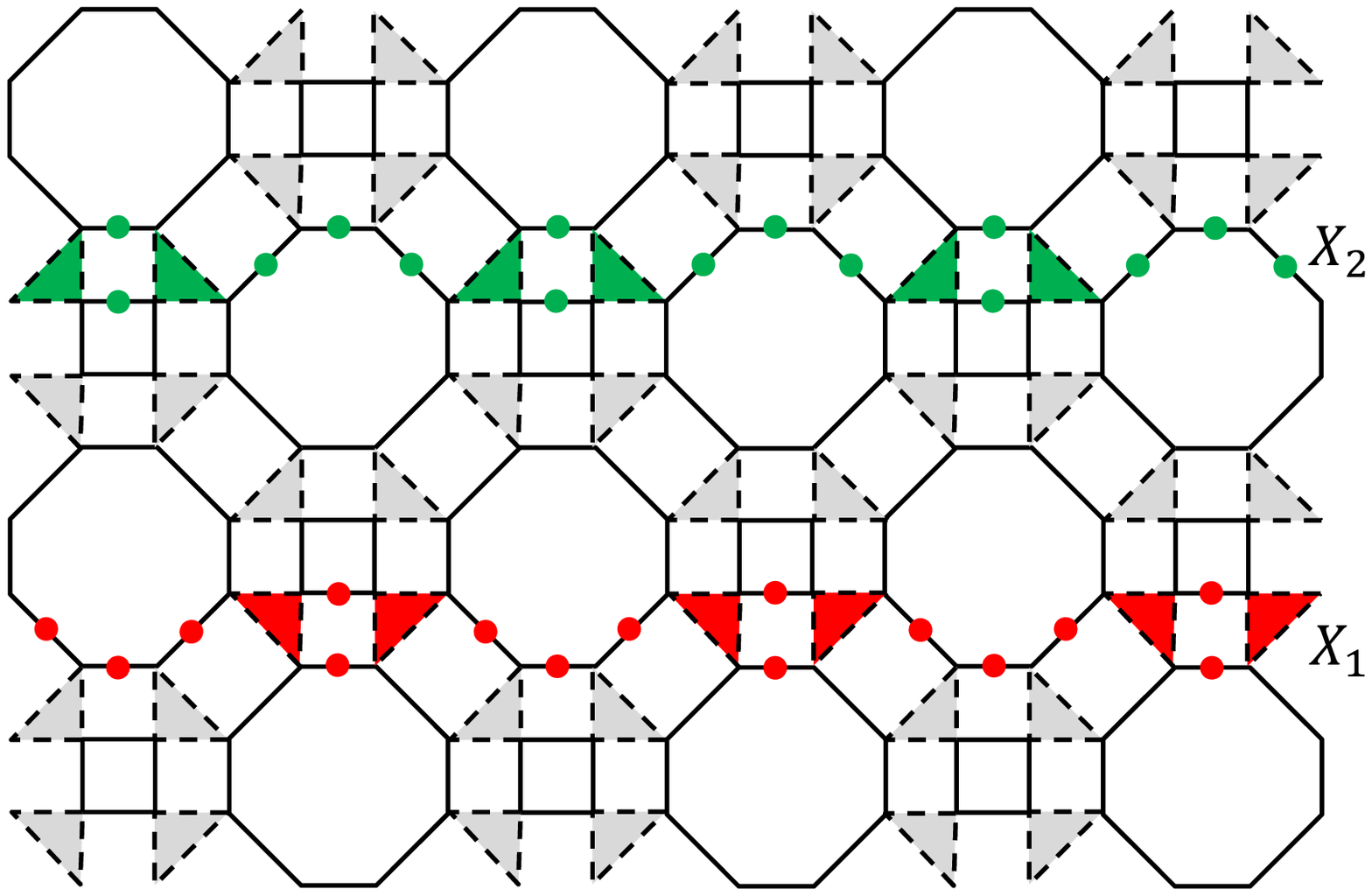}
\caption{(Color Online) Basis cycles $X_1,X_2\in \calZ(\calL_{so})$ corresponding to homologically non-trivial horizontal  loops on the lattice.}
\label{fig:soX}
\end{figure}

Let us now consider commutation rules between the corresponding loop operators. We note that the cycles $A$ and $D$ contain no triangles which implies that $W(A)$ and $W(D)$ commute with all loop operators, see Lemma~\ref{lemma:loopCR}, that is, $W(A),W(D)\in \calG_{loop}\cap \calC(\calG_{loop})$. Since $\calG_{loop}\cap \calC(\calG_{loop})=\calS$, see Eq.~(\ref{loops1}), we conclude that $W(A)$ and $W(D)$ are stabilizers. Furthermore, one can easily check that any cycle of type $B$ or $C$ shares either zero or two triangles with any other cycle $B,C,Z_1,Z_2,X_1,X_2$. Lemma~\ref{lemma:loopCR} then implies that $W(B)$ and $W(C)$ commute with all loop operators, that is, $W(B),W(C)\in \calS$. Thus all the loop operators associated with $A,B,C$, and $D$ cycles are stabilizers.

The cycles $Z_1,Z_2$ share exactly one triangle with the cycles $X_1,X_2$ respectively, see Figures~\ref{fig:soZ},\ref{fig:soX}. Lemma~\ref{lemma:loopCR} implies that
\be
\label{bare}
W(X_a) W(Z_b)=(-1)^{\delta_{a,b}} W(Z_b) W(X_a), \quad a,b=1,2.
\ee
We conclude that the code $\calG=\calC(\calG_{loop})$ has exactly two logical qubits and the loop operators $W(Z_1),W(Z_2)$, $W(X_1),W(X_2)$ can be chosen as the bare logical operators $\bar{Z}_1,\bar{Z}_2$, $\bar{X}_1$, $\bar{X}_2$ respectively.

Let us mention that the subsystem code $\calG=\calC(\calG_{loop})$ coincides with the topological subsystem code of~\cite{bombin:topsub} associated with the Union Jack lattice (the dual of the square-octagon lattice). Hence applying our construction to the square-octagon lattice does not provide new examples of codes but rather provides a new angle to look at the already known code.

Let us now consider the hypergraph $\calL_{5s}$ corresponding to the lattice shown in Figure~\ref{fig:s5}. We shall refer to this lattice as the five-squares lattice since it unit cell consists of  five squares. In contrast to the square-octagon lattice, some sites of $\calL_{5s}$ have three incident links and others have two incident links and one incident triangle. Simple algebra shows that the cycle space $\calZ(\calL_{5s})$ has four types of local basis cycles $A,B,C,D$ associated with loops encircling squares and octagons, see Figure~\ref{fig:s5}, and  homologically non-trivial basis cycles $Z_1,Z_2,X_1,X_2$, see Figure~\ref{fig:s5Z} and Figure~\ref{fig:s5X}. As before, Lemma~\ref{lemma:loopCR} implies that the loop operators $W(A),W(B),W(C),W(D)$ generate the stabilizer group $\calS=\calG_{loop}\cap \calC(\calG_{loop})$. The loop operators associated with cycles $Z_1,Z_2,X_1,X_2$ obey commutation rules Eq.~(\ref{bare}) and thus can be chosen as bare logical operators on the two logical qubits.

\begin{figure}[b!]
\centering
\includegraphics[width=.49\textwidth]{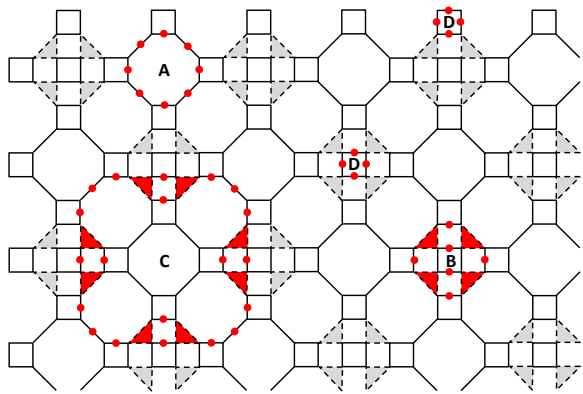}
\caption{(Color Online) The $3$-valent hypergraph $\calL_{5s}=(V,E)$ associated with the five-squares lattice embedded into a torus (periodic boundary conditions). The underlying lattice of the subsystem code is obtained from the hypergraph $\calL_{5s}$ by breaking up each triangle into a triple of links (dashed lines). Four type of local basis cycles $A,B,C,D\in \calZ(\calL_{so})$ are highlighted in red.}
\label{fig:s5}
\end{figure}

\begin{figure}[t!]
\centering
\includegraphics[width=.49\textwidth]{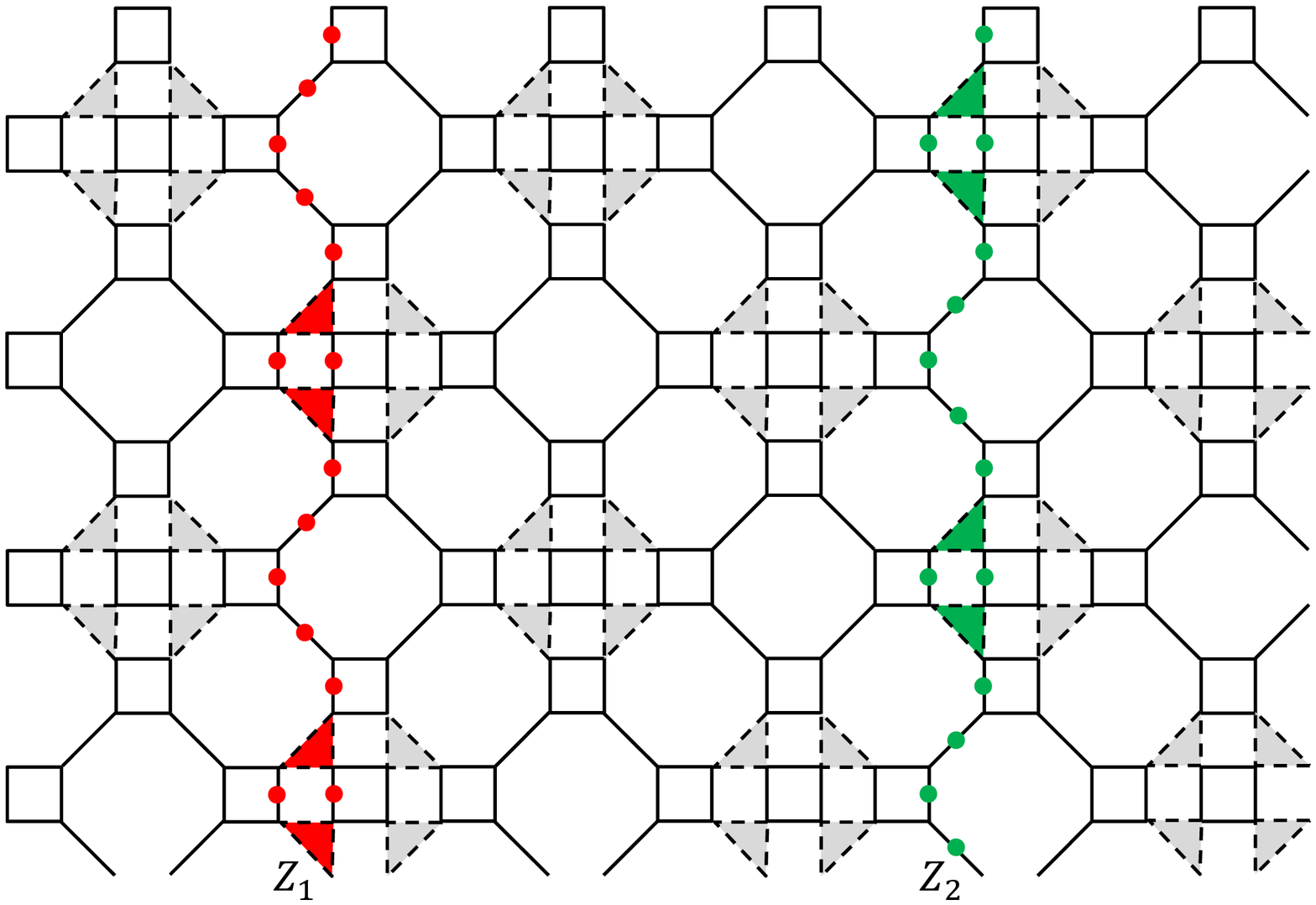}
\caption{(Color Online) Basis cycles $Z_1,Z_2\in \calZ(\calL_{5s})$ corresponding to homologically non-trivial vertical loops on the lattice.}
\label{fig:s5Z}
\end{figure}

\begin{figure}[h!]
\centering
\includegraphics[width=.49\textwidth]{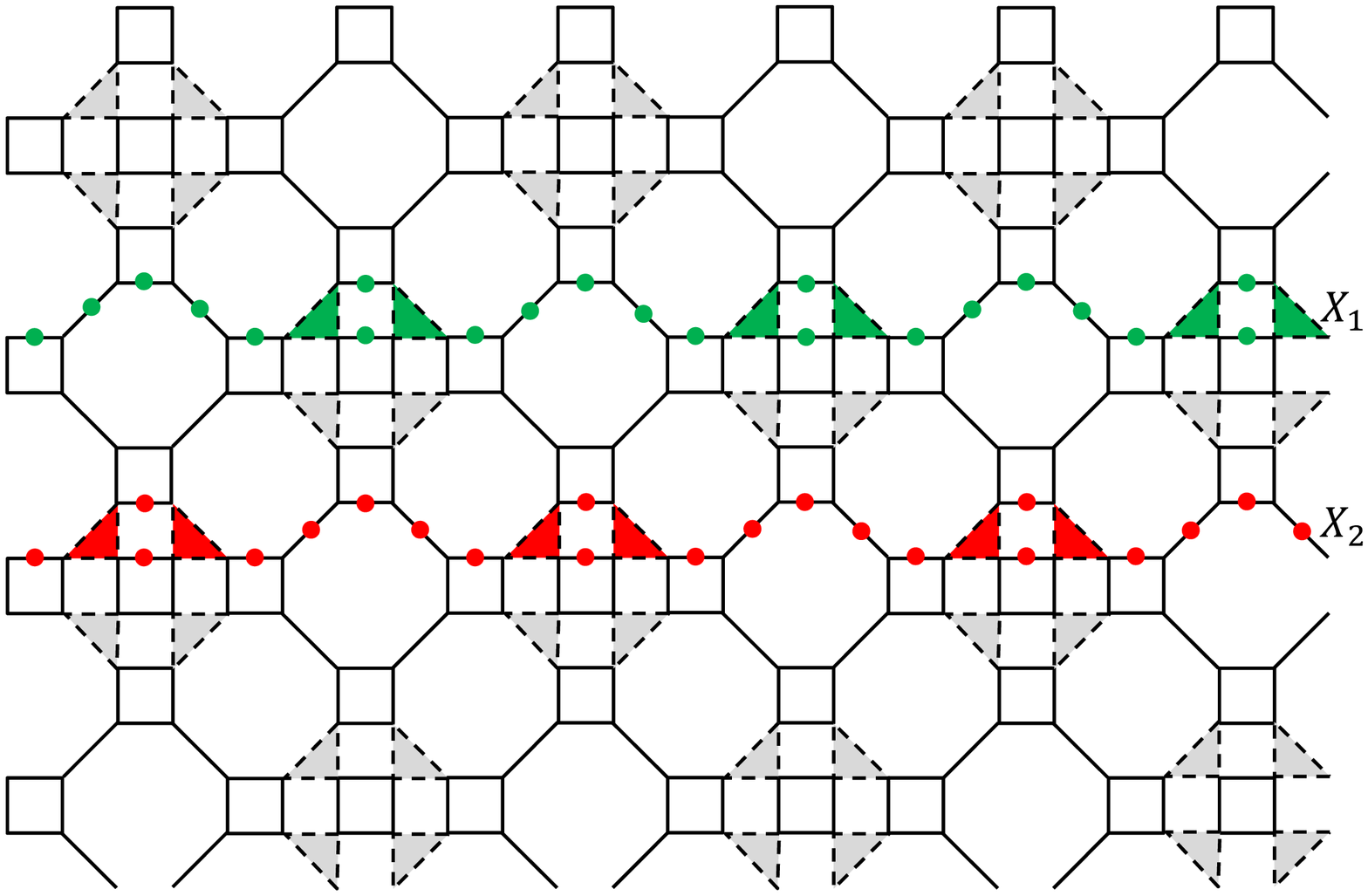}
\caption{(Color Online) Basis cycles $X_1,X_2\in \calZ(\calL_{5s})$ corresponding to homologically non-trivial horizontal loops on the lattice.}
\label{fig:s5X}
\end{figure}

Let us briefly sketch how the syndrome measurement for the stabilizers of different type can be simulated by measuring eigenvalue of the two-qubit link operators $K_e'$. Stabilizers of type $A$ and $D$ correspond to closed even-length loops containing no triangles. Hence the syndrome measurement strategy used for the honeycomb model in Section~\ref{subs:honeycomb} can be applied directly to stabilizers of type $A$ and $D$. Consider now a stabilizer of type $B$. It consists of a solid center square and a length-$8$ external loop with alternating solid and dashed links, see Figures~\ref{fig:so},\ref{fig:s5}. One can easily check that the stabilizer $W(B)$ can be represented as a product of the eight link operators $K_e'$ lying on the external loop and operators $K_e'$ associated with the two vertical solid links $e\notin B$ lying on the central square, see Figure~\ref{fig:stabilizersMeasurement}. In order to satisfy condition Eq.~(\ref{syndrom1}) one can divide  these $10$ links into three levels: Level-$1$: horizontal solid links, Level-$2$: vertical solid links, and  Level-$3$:  dashed links. One can easily check that all link operators $K_e'$ at the same level mutually commute and any link operator at level $j$ commutes with the product of all link operators at levels $<j$. Hence we can satisfy Eq.~(\ref{syndrom1}) using any ``level-nondecreasing" order of measurements: we first measure all level-$1$ links (in any order), then measure all level-$2$ links (in any order), and finally measure all level-$3$ links (in any order). A similar strategy can be applied to stabilizers of type $C$. For example, the stabilizer $W(C)$ for the five-squares code, see Figure~\ref{fig:s5}, can be represented as  a product of $40$ links operators $K_e'$. These $40$ links can be divided into three levels such that all link operators $K_e'$ at the same level mutually commute and any link operator at level $j$ commutes with the product of all link operators at levels $<j$. Hence any level-nondecreasing order of measurements obeys the condition Eq.~(\ref{syndrom1}).

\section{Decoding Algorithms}
\label{Decoding_algorithm}

\subsection{Maximum Likelihood Decoding}
Here we describe the maximum likelihood decoding  for an arbitrary subsystem code $\calG$. Our derivation mostly follows \cite{bombin:topsub}. Let us assume that the noise can be described by random Pauli errors $E$ drawn from some distribution $\prob{E}$. Let $\calS=\calG\cap \calC(\calG)$ be the stabilizer group of $\calG$. For any Pauli error $E$ let $[E]$ be the (left) coset of $\calG$ that contains $E$, that is, $[E]=E\cdot \calG$. Obviously, the syndrome $s$ caused by an error $E$ depends only on the coset $[E]$. Let $s(h)$ be the syndrome caused by any error in the coset $h$. The goal of the decoding algorithm is to identify an error up to a gauge operator based on its syndrome.  In other words, one needs to guess the coset $h$ based on the syndrome $s(h)$.
\begin{prop}
Suppose the code has $k$ logical qubits.  Then for any given syndrome $s$ there are $4^k$ different cosets $h$ such that $s(h)=s$.
\end{prop}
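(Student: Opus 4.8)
\textbf{Proof proposal.}
The plan is to reduce the whole statement to a linear-algebra count over the binary field. First I would pass to the quotient $\calP_n/\{\pm 1,\pm i\}$ and identify it with the symplectic space $\mathbb{F}_2^{2n}$, in which two Pauli operators commute precisely when their images are orthogonal with respect to the symplectic form $\la\cdot,\cdot\ra$ recording the $\pm 1$ commutation sign. Fix independent generators $S_1,\dots,S_r$ of $\calS$. The syndrome of an error $E$ is the vector of commutation signs of $E$ with the $S_i$, so the syndrome map $\sigma\colon\calP_n\to\mathbb{F}_2^r$ given by $\sigma(E)_i=\la E,S_i\ra$ is a group homomorphism whose kernel is exactly the centralizer $\calC(\calS)$.

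Next I would make precise the paper's remark that $s(h)$ depends only on the coset $h=[E]$. Because $\calS\subseteq\calC(\calG)$, every gauge operator commutes with every stabilizer, which gives the inclusion $\calG\subseteq\calC(\calS)=\ker\sigma$; hence $\sigma$ descends to a homomorphism $\bar\sigma\colon\calP_n/\calG\to\mathbb{F}_2^r$. The sets $\{h:s(h)=s\}$ that we must count are exactly the fibers of $\bar\sigma$. Since $\bar\sigma$ is a homomorphism of abelian groups, every nonempty fiber is a translate of $\ker\bar\sigma$ and therefore has the same cardinality $|\ker\bar\sigma|=[\calC(\calS):\calG]$.

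Two points then finish the argument. For surjectivity of $\bar\sigma$ (equivalently of $\sigma$): the generators $S_1,\dots,S_r$ being independent means their images in $\mathbb{F}_2^{2n}$ are linearly independent, and nondegeneracy of the symplectic form then forces $E\mapsto(\la E,S_i\ra)_i$ to hit all of $\mathbb{F}_2^r$; thus every syndrome is realized and no fiber is empty, so the count $4^k$ applies to \emph{every} given syndrome. For the index $[\calC(\calS):\calG]$, I would compute dimensions over $\mathbb{F}_2$ (modulo phases): writing $r=\dim\calS$ and $m=\dim\calG$, symplectic duality gives $\dim\calC(\calS)=2n-r$ and $\dim\calC(\calG)=2n-m$, so $\log_2[\calC(\calS):\calG]=(2n-r)-m$. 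The definition of $k$ through $\calC(\calG)=\la\calS,\bar X_1,\bar Z_1,\dots,\bar X_k,\bar Z_k\ra$ gives $\dim\calC(\calG)-\dim\calS=2k$, i.e. $2n-m-r=2k$. Hence $[\calC(\calS):\calG]=2^{2k}=4^k$, and combined with the equal-fiber observation this proves the proposition.

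The main obstacle I anticipate is the bookkeeping in this index computation rather than any conceptual difficulty: one must confirm that, modulo phases, the center of $\calG$ is exactly $\calS$, so that $\calC(\calS)/\calS$ carries a nondegenerate symplectic form splitting cleanly into logical and gauge degrees of freedom. An equivalent and perhaps more transparent route is to factor the index through the tower $\calS\subseteq\calG\subseteq\calC(\calS)$, writing $[\calC(\calS):\calG]=[\calC(\calS):\calS]/[\calG:\calS]=4^{k+g}/4^{g}=4^{k}$, where $g$ is the number of gauge qubits; this avoids the symplectic-complement dimension formula at the cost of invoking the standard subsystem decomposition. Either way the proposition collapses to a single dimension count.
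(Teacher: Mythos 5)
Your proof is correct, and at its core it counts the same object the paper does: both arguments identify the set of cosets with a fixed syndrome with the quotient $\calC(\calS)/\calG$ and then show that this quotient has $4^k$ elements. Where you differ is in how that count is carried out. The paper uses the structural identity $\calC(\calS)=\calC(\calG)\cdot\calG$ to write the ratio of two equal-syndrome errors as (bare logical operator)\,$\times$\,(gauge operator), and then enumerates the $4^k$ bare logical classes directly; you instead compute the index $[\calC(\calS):\calG]$ by symplectic duality over $\mathbb{F}_2$, using $\dim\calC(\calS)=2n-\dim\calS$ and $\dim\calC(\calG)=2n-\dim\calG$ together with $\dim\calC(\calG)-\dim\calS=2k$. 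The two routes are logically equivalent (the identity $\calC(\calS)=\calC(\calG)\cdot\calG$ is exactly what your dimension count certifies), but yours is more self-contained, and it adds one point the paper leaves implicit: by proving surjectivity of the syndrome homomorphism from the independence of the stabilizer generators and nondegeneracy of the symplectic form, you show that \emph{every} syndrome $s\in\{+1,-1\}^r$ is realized, whereas the paper's argument only counts cosets for a syndrome already known to occur. The one thing to keep tidy is the treatment of phases: since $iI\in\calG$, cosets of $\calG$ in $\calP_n$ descend consistently to the quotient $\mathbb{F}_2^{2n}$, so your mod-phase bookkeeping is legitimate, but it deserves the one-line remark you allude to.
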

\begin{proof}
Indeed, suppose $h=[E]$ and $h'=[E']$ such that $s(h)=s(h')=s$. Then $E\cdot E'\in \calC(\calS)$. Using the identity $\calC(\calS)=\calC(\calG)\cdot \calG$ we conclude that $E\cdot E'=\bar{P} G$ for some bare logical operator $\bar{P}\in \calC(\calG)$ (may be the identity), and some gauge operator $G\in \calG$. Since $E$ and $E'$ are defined modulo gauge operators, we can ignore $G$. Furthermore, if $h$ and $h'$ are distinct cosets then $\bar{P}$ is a non-trivial logical operator. Since there are exactly $4^k$ distinct bare logical operators, there must be exactly $4^k$ distinct cosets $h$ such that $s(h)=s$.
\end{proof}
Let $\calE(s)$ be the set of all $4^k$ cosets $h$ such that $s(h)=s$. The chosen probability distribution of errors induces some probability distribution $P(h)$  of cosets, namely,
\[
P(h)=\sum_{E\in h}\, \prob{E}.
\]
Given a syndrome $s$,  the maximum likelihood decoder computes the probability $P(h)$ for all cosets $h\in \calE(s)$ and then selects the coset with the largest probability. For any error $E$ one has
\[
\prob{[E]}=\sum_{G\in \calG} \prob{EG}.
\]
For example, for depolarizing noise with an error rate $p$, one has
\[
\prob{E}\sim \left(\frac{p}{3(1-p)}\right)^{|E|}.
\]
Let us introduce an inverse temperature $\beta$ such that
\be
\exp{(-2\beta)}=\frac{p}{3(1-p)}.
\label{eq:pT}
\ee
Then one arrives at
\[
\prob{[E]}\sim \sum_{G\in \calG} e^{-2\beta |E G|}\equiv Z(E).
\]
Note that by definition the partition function $Z(E)$ depends only on the coset $[E]$. Here $Z(E)$ can be regarded as a partition function of a classical spin Hamiltonian with a quenched disorder determined by $E$ and the inverse temperature $2\beta$. The classical spins are associated with generators $K_e$ of the gauge group. Indeed, we can represent
\[
G=\prod_{e\in E} K_e^{\alpha(e)}, \quad \alpha(e)\in \{0,1\}.
\]
Then the Hamming weight $|E G|$ can be expressed a local Hamiltonian depending on the classical binary variables $\alpha(e)$. Parameterizing $\alpha$'s by Ising spins, one express $Z(E)$ as a partition function of an Ising-like Hamiltonian with inverse temperature $\beta$. Finding the values of the binary variables $\alpha(e)$ which minimize the weight $|EG|$ corresponds to minimizing the energy of the local Hamiltonian. Due to the particular structure of the Pauli group, this minimization is an instance of a MAX-XORSAT problem with local SAT constraints between binary variables on a 2D lattice.

Following~\cite{topological_quantum_memory,bombin:topsub} we can relate the threshold value of the error rate $p$ to a phase transition in the statistical mechanics model corresponding to $Z(E)$, where $E$ is regarded as a quenched disorder. Indeed, suppose the error rate $p$ is below the threshold and let $E$ be a typical error with a syndrome $s=s(E)$. Then one should expect that in the limit of large lattice size $L$ all cosets in $\calE(s)$ have negligible probability except for the coset $[E]$. In other words, $\prob{[E]}\gg \prob{[\bar{P} E]}$ for any non-trivial logical operator $\bar{P}\in \calC(\calG)\backslash \calG$.
On the other hand, if the error rate $p$ is above the threshold, one should expect that in the limit of large lattice size all cosets in $\calE(s)$ have the same probability. Thus we can determine the threshold error rate $p_c$ using the following heuristic criterion:
\bea
p<p_c \quad \Rightarrow \quad
\lim_{L\to \infty} \sum_E \prob{E} \, \log{\left(\frac{Z(E)}{Z(\bar{P} E)}\right)} = \infty, \nonumber \\
p>p_c \quad \Rightarrow \quad
\lim_{L\to \infty} \sum_E \prob{E} \, \log{\left(\frac{Z(E)}{Z(\bar{P} E)}\right)} = 0.
\label{eq:belowabove}
\eea
Assuming that the code $\calG$ has a macroscopic distance, the logical operator $\bar{P}$ can be ``cleaned out" from any local region of the lattice by a gauge transformation~\cite{BT08}. Hence the difference between the partition functions $Z(E)$ and $Z(\bar{P} E)$ is hidden if one inspects any local region of the lattice. It can only reveal itself if one inspects the entire lattice and if the system exhibits a long-range order. Hence one should expect that the point $p=p_c$
corresponds to a phase transition between a magnetically ordered phase ($p<p_c$)
and a paramagnetic phase ($p>p_c$).

In general, implementing maximum likelihood decoding is a computationally inefficient task, since one would have to estimate the different partition functions $Z(\bar{P}E)$ for a local 2D Hamiltonian with quenched disorder. Thus one has to seek efficient strategies which approximate the action of such ideal decoder. Given the relatively low error rates $p$ which correspond to low temperatures via Eq.~(\ref{eq:pT}), one may argue (as in \cite{topological_quantum_memory}) that maximum-likelihood decoding is pretty well approximated by considering $\beta \rightarrow \infty$ or doing {\em minimum-weight decoding}. For minimum-weight decoding, we consider which 2D local Hamiltonian with quenched disorder determined by $\bar{P}E$ has minimum energy. In other words, we find the error consistent with the syndrome which has minimum Hamming weight. Again, for this minimization problem, --instances of a two-dimensional MAX-XORSAT problem--, there does not necessarily exist an exact efficient algorithm. For the 2D surface code, there is an efficient algorithm for minimum-weight decoding which is Edmond's efficient polynomial time (in the lattice size) minimum-weight matching algorithm \cite{chinese_postman_problem}. Of course, even the absence of an exact algorithm does not preclude the existence of good decoders based on heuristic optimization strategies which may be more computationally efficient than exact algorithms. This is the motivation for the work in \cite{DCP:topdec_prl,DCP:topdec_long,PBGC:square_oct} in which the authors develop fast renormalization group decoding algorithms which are suitable for any topological stabilizer or subsystem code.

In Sections~\ref{ST_decoding} and \ref{Improved_ST_decoding} we will describe a simple decoding algorithm for the five-squares code. The decoding algorithm in Section \ref{ST_decoding} is improved in Section \ref{Improved_ST_decoding} by reducing the weight of the guessed error which gives a slight boost to the noise threshold. For the five-squares code, we have developed a decoder which is fine-tuned to the particular structure of the code and which maps the decoding after some preprocessing onto decoding of two copies of the toric code. For the decoding of the toric code, one can then use a minimum-weight matching algorithm or renormalization group decoding. Let us point out that applying a similar decoding strategy to the subsystem code defined on the square-octagon lattice, see Figure~\ref{fig:so}, one can reduce the decoding to correcting $Z$ errors using stabilizers of the topological color code~\cite{BMD:topo}. In \cite{PBGC:square_oct} the authors are able to establish a much more general correspondence between the toric code and any topological subsystem code \cite{bombin:topsub}. They show that any topological subsystem code is locally unitarily equivalent to some copies of the toric code. This implies that the decoding of any topological subsystem code can proceed via efficient decoding algorithms for the toric code.

\subsection{Decoding for the Five-Squares Code}
\label{ST_decoding}
To describe our decoding algorithm it will be convenient to fix a particular choice of the link operators $K_e$ associated with the solid links. We will choose $K_{u,v}=X_u Y_v$ for all solid links $(u,v)$ that belong to the boundary of some square, see Figure~\ref{fig:xy}. Commutation rules Eq.~(\ref{d3com}) then imply that one has to choose $K_{u,v}=Z_u Z_v$ whenever $(u,v)$ is a solid link such that $u$ and $v$ belong to different squares, see Figure~\ref{fig:zz}. Recall that the link operators associated with dashed links are fixed by Eq.~(\ref{K'}).

\begin{figure}[b!]
\centering
\includegraphics[width=.49\textwidth]{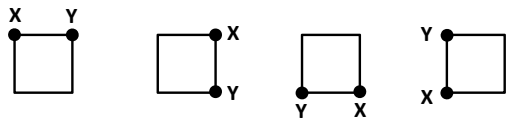}
\caption{Link operators $K_{u,v}$ associated with solid links $(u,v)$ that belong to boundary of some square.}
\label{fig:xy}
\end{figure}

\begin{figure}[b!]
\centering
\includegraphics[width=.49\textwidth]{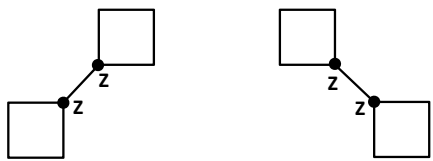}
\caption{Link operators $K_{u,v}$ associated with solid links $(u,v)$ connecting two different squares.}
\label{fig:zz}
\end{figure}

Our decoding algorithm exploits the fact that errors only need to be corrected up to multiplication with elements in the gauge group ${\cal G}$.  In particular, gauge group elements can be used to move errors around and simplify the decoding.
To illustrate this point let us consider any solid  square with the corresponding qubits labeled by $1,2,3,4$, see Figure~\ref{fig:xErrors}. We claim that instead of considering general Pauli errors acting on these qubits, it suffices to consider simplified errors from the group $\la X_1,Z_1,Z_2,Z_3,Z_4\ra$, that is, arbitrary $Z$ errors and $X$ errors acting only on qubit $1$. Indeed, using the gauge operators $K_{u,v}=X_u Y_v$ and $Z$ errors one can generate any Pauli operator $P$ that contains $Z$'s and even number of $X$'s. Multiplying $P$ by $X_1$ we can obtain any Pauli operator on qubits $1,2,3,4$. Thus it suffices to consider arbitrary $Z$ errors and $X$ errors acting only at a single location in each square (say, the north-west corner). Let us also note that the $D$-stabilizer associated with the considered square is
\[
S=K_{1,2} K_{2,3} K_{3,4} K_{4,1}= Z_1 Z_2 Z_3 Z_4.
\]

\begin{figure}[t]
\centering
\subfigure[It is sufficient to correct $X$ errors at qubit $1$.]{
\includegraphics[width=.22\textwidth]{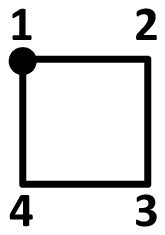}
\label{fig:xErrors}
}
\hspace{0.6mm}
\centering
\subfigure[$Z$ errors need to be corrected at qubits $1$, $2$, $4$, $19$, and $20$.]{
\includegraphics[width=.22\textwidth]{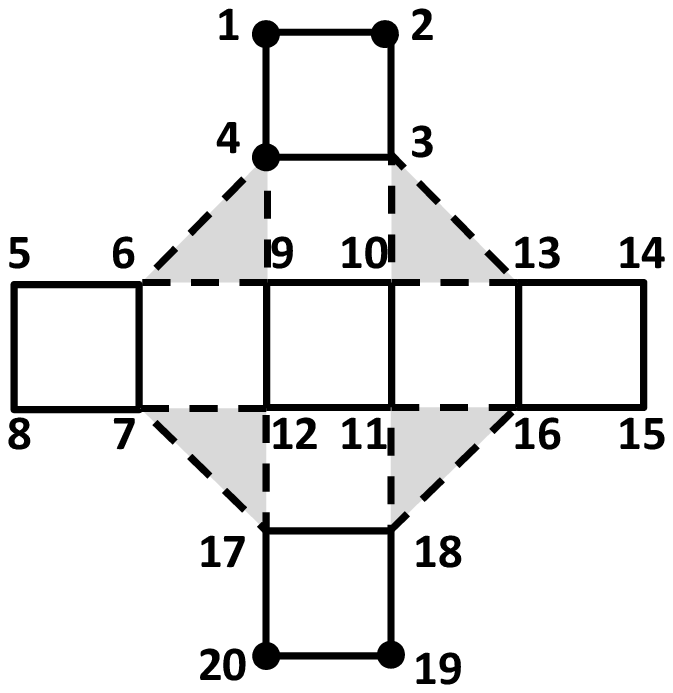}
\label{fig:yErrors}
}
\caption{Locations of the errors.}
\label{fig:errorLocations}
\end{figure}

We can now use a similar trick to reduce the set of possible locations for $Z$ errors. Consider a {\em unit cell} of the lattice shown in Figure~\ref{fig:yErrors} that consists of $20$ qubits. We claim that any $Z$ error acting on the unit cell can be reduced to $Z$ errors at the locations $1,2,4,19,20$. Indeed, let us first describe relevant gauge operators of $Z$ type that we shall use to simplify the error. Define a group $\calG^z\subseteq \la Z_1,\ldots,Z_{20}\ra$ generated by the link operators $Z_u Z_v$ associated with $12$ dashed links within the unit cell and $D$-stabilizers $Z_s Z_t Z_u Z_v$ associated with the four  squares on the top, bottom, left, and right edges of the cell. Let us partition the unit cell into $12$ {\em internal} qubits that belong to some triangle and $8$ {\em external} qubits, that is, qubits $1,2,5,8,14,15,19,20$, see Figure~\ref{fig:yErrors}. Consider a subgroup $\calG^z_{int}$ generated by the dashed link operators $Z_u Z_v$ and truncated versions of the $D$-stabilizers  $Z_3 Z_4$, $Z_6 Z_7$, $Z_{13} Z_{16}$, and $Z_{17} Z_{18}$. In other words, $\calG^z_{int}$ is a `projection' of the group $\calG^z$ onto the internal qubits. Since any pair of internal qubits can be connected by a path of $ZZ$ generators belonging to $\calG^z_{int}$, we conclude that  $\calG^z_{int}$ contains all {\em even-weight} $Z$-operators on the internal qubits. It follows that any $Z$ error $P$ acting on the unit cell can be represented as $P=P'G$, where $G\in \calG^z$ and $P'$ contains only $Z$ errors at the external qubits and some selected internal qubit, say qubit $4$. Finally, we can move $Z$ errors out of any external qubit $u\in \{5,8,14,15\}$ using the solid link operators $Z_u Z_v$, where $v\in \{1,2,19,20\}$ belongs to one of the adjacent unit cells.

To summarize, it suffices to consider only $X$ errors acting on the north-west corners of each square and $Z$ errors acting on locations $1,2,4,19,20$ of each unit cell, see Figures~\ref{fig:xErrors},\ref{fig:yErrors}.

We choose the following decoding procedure to correct this reduced set of errors. Assume that all stabilizer generators $A$, $B$ and $C$ and $D$ have been measured perfectly. We first consider the syndromes of stabilizers  $D$. Note that $D$ anti-commutes with any single $X$ error at the corresponding square and commutes with all $Z$ errors. For all non-zero $D$-syndromes, we will apply $X$ to the north-west corner of the corresponding square which corrects all $X$ errors.
After this step we are left with $Z$ errors at the locations $\{1,2,4,19,20\}$ in each unit cell.

Note that this error correction step may change the syndromes of $A$, $B$, and $C$. For example, an $X$ error in the square in the center of the unit cell anticommutes with the corresponding stabilizer $B$. The syndromes $A$, $B$, and $C$ can be either measured after the first error correction step, or their new values can be calculated based on their initial values and the error corrections that were applied (i.e. updating the so-called Pauli frame).

Any $B$ stabilizer acts trivially on locations $\{1,2,19,20\}$ in the corresponding unit cell and acts by $Y$ at the location $4$. Hence the only $Z$ error in the unit cell that can create a non-trivial $B$-syndrome is the error $Z_4$. For all non-zero $B$-syndromes we will apply $Z_4$ in the corresponding unit cell to correct the $Z$-error.

It remains to show how to correct the remaining $Z$ errors at locations $\{1,2,19,20\}$. These qubits are divided into two groups, which are depicted separately in Figure~\ref{fig:transformedLattice1} and \ref{fig:transformedLattice2}. These figures show how the stabilizers $A$ and $C$ act on the qubits. Each of the stabilizers acts on four of the considered qubits. Stabilizers of type $A$ are represented by blue and orange squares, and $C$ operators are represented by red and green squares. Thus the reduced lattice effectively consists of two disjoint lattices on which the stabilizers acts as four-body plaquette operators. The only occurring errors on these two disjoint lattice are $Z$ errors which anti-commute individually with the $A$ and $C$ stabilizers which cover their support. Hence, one can view the lattice in e.g. Figure~\ref{fig:transformedLattice1} as half of a toric code and the $Z$ errors can be corrected using any decoding strategy for the toric code. We have used decoding via the minimum weight perfect matching algorithm similarly as in~\cite{topological_quantum_memory}.


\begin{figure}[t!]
\centering
\includegraphics[width=.49\textwidth]{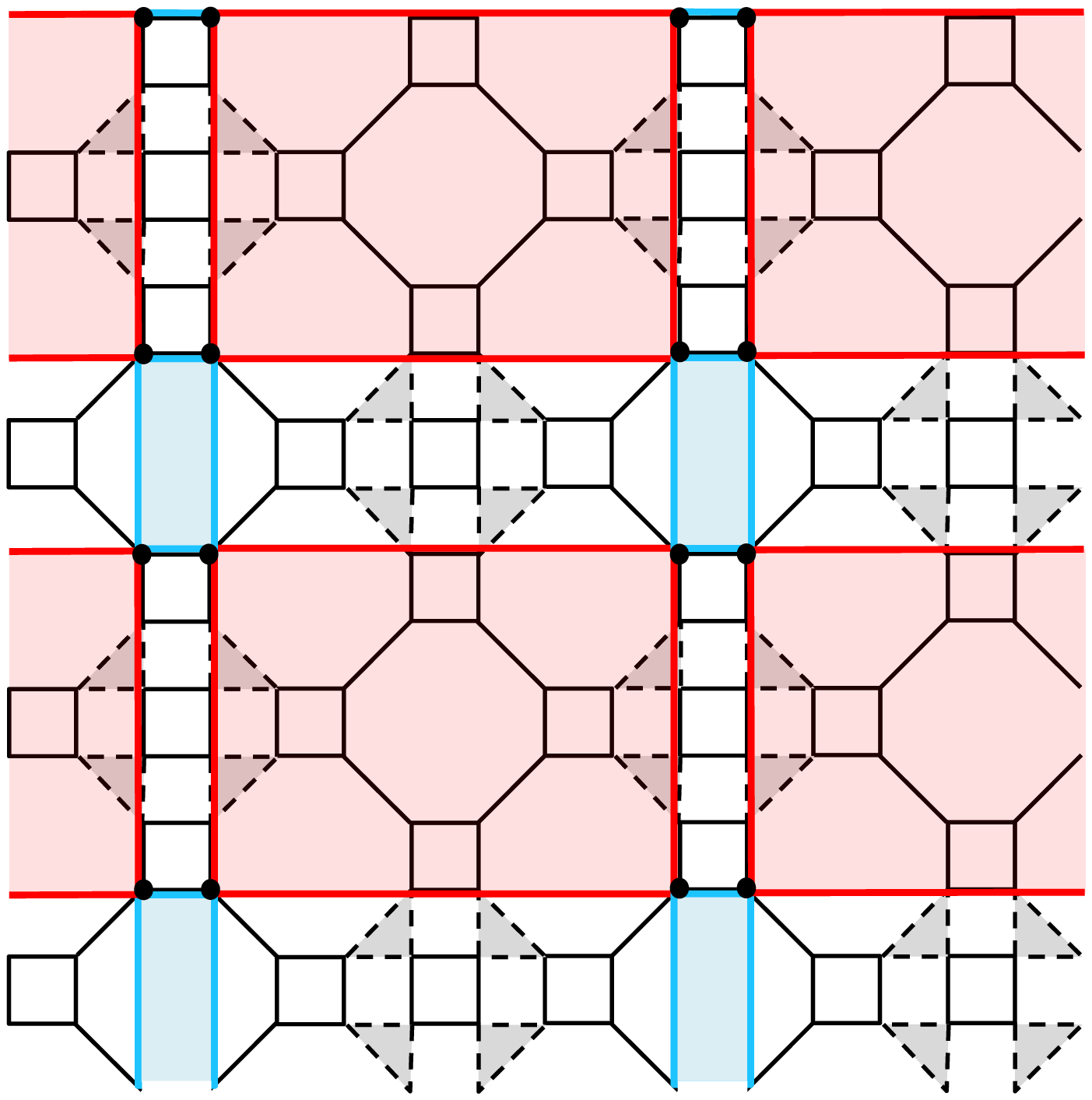}
\caption{(Color Online) The stabilizers $A$ and $C$ on a lattice with the four $Z$ errors in the unit cell.}
\label{fig:transformedLattice1}
\end{figure}

\begin{figure}[h!]
\centering
\includegraphics[width=.49\textwidth]{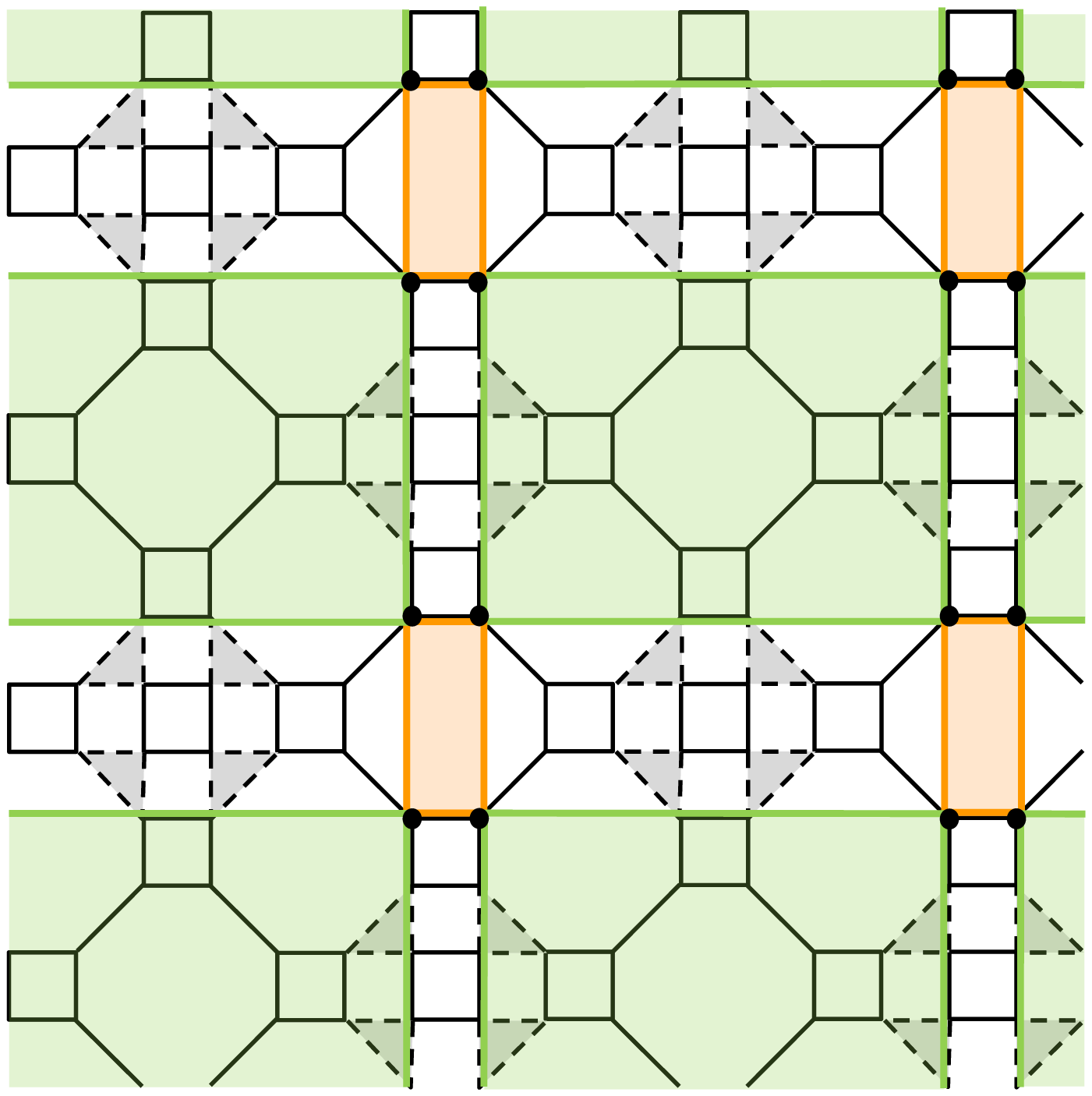}
\caption{(Color Online) The stabilizers $A$ and $C$ on a lattice with the four $Z$ errors in the unit cell.}
\label{fig:transformedLattice2}
\end{figure}

We can now try to give a rough estimate of the noiseless error correction threshold, solely based on this mapping onto decoding for the toric code. An upper-bound for the threshold against, say, $X$ errors for the toric code is $11\%$ \cite{WHP:threshold} and this is close that what minimum weight decoding achieves. We can derive a rough estimate for the effective single-qubit noise rate for $Z$ errors on the reduced toric code lattice, as a function of the basic depolarizing noise rate. On the original lattice, we have a single-qubit $Z$-error rate of $2p/3$ where $p$ is the depolarizing error rate. Some $X$ errors get corrected, but some also get mapped onto $Z$ errors so, perhaps one should consider that after the first round of error correction (the processing of the syndrome of operator $D$), one has $Z$ error rates between $2p/3$ and $p$ on all sites of the lattices (for example, the $Z$ error rate on qubit at the location "1" is $2p/3$). Using multiplication by the gauge group elements, we replace these $20$ possible error locations in a unit cell by $4$ effective error locations. Hence the effective $Z$-error on these $4$ error locations is between $10 p/3$ and $5p$. This would suggest that the five-squares code can achieve at most a depolarizing noise threshold between $2.2\%$ and $3.3\%$ using this decoding method.

This analysis is rough, but it suggests why this decoding strategy for the fives-squares code will lead to a threshold which is markedly lower than the threshold of the surface code. The topological subsystem code has a larger overhead in terms of number of qubits in order to make for very local gauge group generators. If this larger overhead does not contribute to the error correction, it simply increases the effective error rate that is seen by the toric code decoder.

In this rough estimate we have also neglected correlations between errors. As we mentioned earlier, elementary single qubit errors can be mapped onto two-qubit errors by multiplication with gauge group elements. Hence the effective error model that is seen by the toric code decoder is not an i.i.d. single-qubit error model. Our experimental determination of the threshold simulation of course takes these correlations into account. We believe that correlations do not play a major role in depressing the threshold.


\subsection{Improved Decoding for the Five-Squares Code}
\label{Improved_ST_decoding}
The decoding algorithm in the previous subsection is suboptimal because it does not minimize the weight of the corrected error. We reduce the weight of the corrected error by modifying our algorithm -- we correct the $X$ errors in the first step of the algorithm at such locations so that we reduce the number of $Z$ errors that result.

First, we observe that the $X$ errors do not necessarily need to be corrected at qubit $1$ as indicated in Figure~\ref{fig:xErrors}. Instead, if the $D$ syndrome corresponding to a particular square is nontrivial, we can correct the concerned bit flip at any single corner of the square. Further, we observe that depending on the location where we perform the correction, the $B$ syndrome in the corresponding unit cell will or will not change its sign. This is illustrated in Figure~\ref{fig:xErrorsImproved}.

To describe the main correction step, let's focus on a particular unit cell and its syndromes $B$ and $D$. Let's first assume that at least one of the five $D$ syndromes is non-trivial, and the $B$ syndrome is trivial. Then, if we correct the bit flips indicated by the $D$ syndromes at the locations that do not affect the $B$ syndrome, we know that the $Z$ error will not occur at qubit $4$. If there is a least one non-trivial $D$ syndrome, and the $B$ syndrome is non-trivial as well, we perform exactly one of the $X$ error corrections at a location that changes the $B$ syndrome. Again, the $Z$ error will not occur at qubit $4$. The last case to analyze has all $D$ syndromes trivial, and the $B$ syndrome non-trivial. In such a case the $Z$ error correction at qubit $4$ still needs to be performed, and the improved algorithm does not reduce the weight of the corrected error. In the first two analyzed cases, the weight of the corrected error decreases by up to one.

\begin{figure}[t]
\centering
\subfigure[$X$ error corrections at these locations do not affect the $B$ syndrome.]{
\includegraphics[width=.22\textwidth]{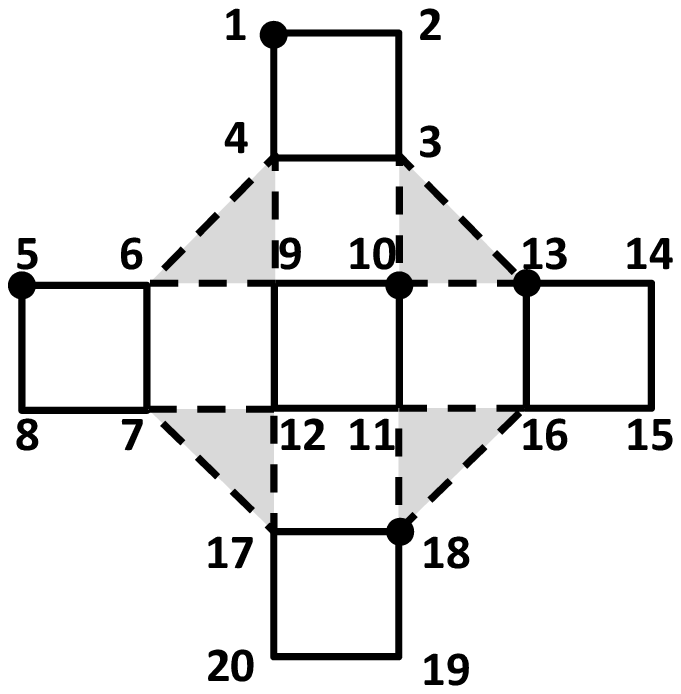}
\label{fig:xErrorsImproved1}
}
\hspace{0.6mm}
\centering
\subfigure[$X$ error corrections at these locations change the sign of the $B$ syndrome.]{
\includegraphics[width=.22\textwidth]{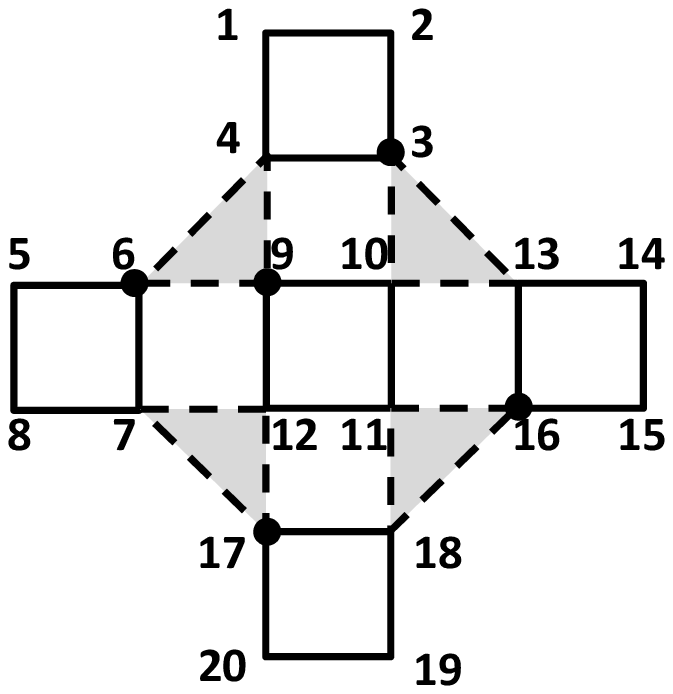}
\label{fig:xErrorsImproved2}
}
\caption{Illustration of the improved decoding.}
\label{fig:xErrorsImproved}
\end{figure}

\subsection{Syndrome Measurements in the Five-Squares Code}
An important aspect concerning the practicality of the five-squares code, in particular the noise threshold against noisy error correction, concerns the number of two-qubit gauge element measurement needed to determine the syndrome of the stabilizer generators. The probability with which a stabilizer generator has a faulty syndrome is experimentally determined by the error rate on the two-qubit gauge element measurements.

Consider, for example, stabilizer $B$ in the five-squares code. Figure~\ref{fig:stabilizersMeasurement} shows that $10$ link operators need to be measured to determine the syndrome. It is easy to verify that to determine syndromes $A$, $C$, and $D$ in the five-squares code, respectively, $8$, $40$, and $4$ two-qubit measurements need to be taken. One may then anticipate that the probability of a faulty stabilizer generator syndromes, say for a stabilizer of type $C$ is roughly at most $40$ times the probability for a faulty two-qubit gauge element measurement. This implies that the noise threshold against noisy error correction may be considerably lower than the $2\%$ noise rate that we obtain here. We note that for the square-octagon code the number of measurements to determine the four types of syndromes is slightly smaller, namely $14$, $8$, $4$, and $20$.

\begin{figure}[t!]
\centering
\includegraphics[width=.45\textwidth]{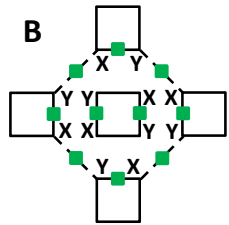}
\caption{Example -- these ten link operators need to be measured to determine syndrome $B$ of the five-squares code.}
\label{fig:stabilizersMeasurement}
\end{figure}

\section{Experimental Evaluation of the Threshold}
\label{Experiments}

We performed extensive simulations to evaluate the threshold properties of the simple and improved decoding algorithm. We experimented with a range of lattice sizes and depolarizing error rates, recording the ratio of unsuccessful error corrections for each such configuration. This allowed us to accurately estimate the threshold because for large lattice sizes, the decoding succeeds with high probability for error rates below the threshold, and fails otherwise.

Our simulation first generates a random depolarizing error $E$, and then it utilizes the simple or improved decoding algorithm to determine the error $E_{guessed}$ that should be corrected according to the decoder. In the final step, we determine whether correcting the original error $E$ by applying the guessed error $E_{guessed}$ fails, i.e., if $E E_{guessed}$ is proportional to a logical operator. More precisely, for a particular depolarizing error rate $p$, and a fixed lattice size, the simulation proceeds as follows:
\begin{enumerate}
    \item Generate a random error $E$ with depolarizing probability $p$ in the five-squares lattice.
    \item Calculate syndromes $A$, $B$, $C$, and $D$.
    \item Add $X$ errors indicated by syndromes $D$ to $E_{guessed}$, and update syndromes $A$, $B$, and $C$ assuming that the error $E_{guessed}$ is corrected (i.e., the Pauli frame is updated).
    \item Add $Z$ errors indicated by syndromes $B$ to $E_{guessed}$, and update syndromes $A$, and $C$ accordingly.
    \item Correct the errors in the two sub-lattices in Figure~\ref{fig:transformedLattice1} and \ref{fig:transformedLattice2} separately. Use the minimum weight matching algorithm, and add $Z$ errors on the shortest lines connecting the matched non-trivial syndromes to $E_{guessed}$, as described in detail below.
    \item Calculate $O = EE_{guessed}$. If $O$ is a logical operator on the five-squares lattice, correction fails. Otherwise it succeeds. To determine if $O$ is a logical operator, we only need to check if it anticommutes with one of the logical operators $\bar{X}_i$ or $\bar{Z}_i$.
    \item The simulation above is repeated for a different random errors $E$.
\end{enumerate}

Note that the simulations of simple and improved decoding differ in how the third step is executed. In particular, in the third step, simple decoding guesses an $X$ error in the north-west corner of each square with non-trivial syndrome $D$, whereas improved decoding uses also syndrome $B$ to guess the specific location of the $X$ error within the square with non-trivial $D$ syndrome, as described in Section~\ref{Improved_ST_decoding}. Note that both decoders correct $Z$ errors at location $4$ in the unit cell (see Figure \ref{fig:yErrors}) in the fourth step if the corresponding syndrome $B$ is non-trivial.

The minimum weight matching in the fifth step of the algorithm is used as follows. Since the fifth step is repeated twice, once using syndromes depicted in Figure~\ref{fig:transformedLattice1}, and then analogously using syndromes depicted in Figure~\ref{fig:transformedLattice2}, we only describe the former. We identify the syndromes $A$ and $C$ in Figure~\ref{fig:transformedLattice1} that are non-trivial, and represent each such syndrome as a vertex. Then we calculate the pairwise distance of all such vertices and use the minimum weight matching algorithm to pair up vertices so that the sum of the distances between the vertex pairs is minimized. The distance of syndrome $u$ and $v$ in the graph represents the minimal number of $Z$ errors that have to be introduced to the original lattice in Figure~\ref{fig:transformedLattice1} to change syndromes $u$ and $v$ while keeping all other syndromes intact. The $Z$ errors that need to be introduced to change syndromes $u$ and $v$ lie on the shortest line connecting the two syndromes. These $Z$ errors are added to $E_{guessed}$ in the fifth step of the simulation.

We executed the experiment with the depolarizing error model with error probability ranging from $0\%$ to $5\%$ with $0.1\%$ increments. The experiment was repeated for square lattices consisting of a total of $640$, $2,560$, $10,240$, and $40,960$ qubits. Each of these square lattices consisted of $n$ by $2n$ unit cells where each unit cell has $20$ qubits. For example, the lattice in Figure~\ref{fig:transformedLattice1} has $2$ by $4$ unit cells for a total of $160$ qubits. For each considered depolarizing error rate and lattice size, we repeated the error correction $1,000$ times and recorded the number of experiments that resulted in a failed decoding.

The results are depicted in Figure~\ref{fig:thresholdSimple} for the simple decoding algorithm, and Figure~\ref{fig:thresholdJoint} for the improved decoding algorithm. We observe that in both cases the code exhibits a threshold. In the former case the threshold is around $1.5\%$, and in the later case around $2\%$.

As the error rate increases, the failure probability in Figure~\ref{fig:thresholdSimple} and~\ref{fig:thresholdJoint} approaches $15/16$, which is as expected since the success probability for picking the correct coset at random is 1/16.

We did not perform experiments with the square-octagon code as this code and its noise threshold with noiseless error correction are studied in \cite{PBGC:square_oct}. Surprisingly the authors find a similar depolarizing noise threshold of $2\%$.

\begin{figure}[t!]
\centering
\includegraphics[width=.49\textwidth]{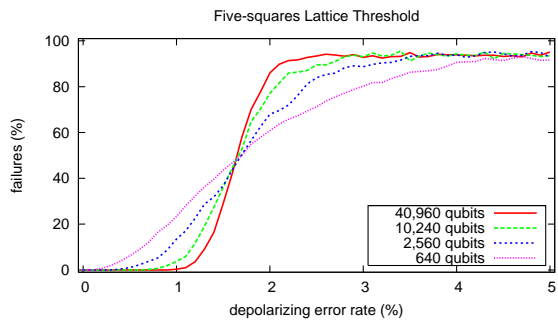}
\caption{The threshold is around 1.5\%.}
\label{fig:thresholdSimple}
\end{figure}

\begin{figure}[t!]
\centering
\includegraphics[width=.49\textwidth]{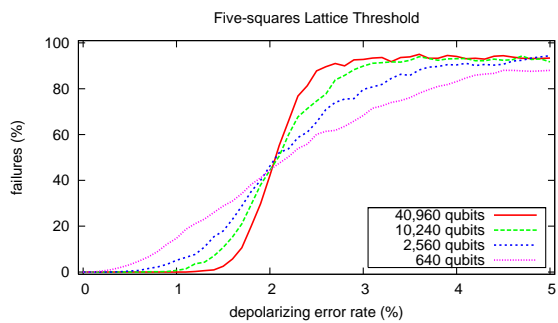}
\caption{The threshold is around 2\%.}
\label{fig:thresholdJoint}
\end{figure}

\section{Conclusion}
\label{Conclusion}

We studied the error correction properties of topological subsystem codes whose gauge group has local two-qubit generators. We provided an example of such code on a two-dimensional grid of qubits and we described a candidate decoding algorithm. Our Monte Carlo simulations indicated that our code exhibits a threshold for a depolarizing error rate of around 2\%. This is substantially lower than the threshold for the surface code with noise-free error correction. We have tried to assess whether the threshold achieved by our improved decoder is close to the theoretical, achievable, threshold in Eq.~(\ref{eq:belowabove}). For this we have numerically executed a simulated annealing algorithm that performs minimum-weight decoding. The algorithm solves the problem of finding the minimum energy of the 2D local Hamiltonian with quenched disorder determined by $\bar{P}E$ by gradually lowering the temperature. It then compares the found minima for various $\bar{P}$ and picks the one with the lowest minimum. Our numerics indicate that this decoder does not improve on the decoders described in the previous section. It is an important open question whether there are other topological subsystem codes which have a threshold which is closer to that of the surface code or whether the lower threshold is an inevitable price that we pay by enhancing the locality, 2-body versus 4-body, of the code.

\ack
The authors would like to thank Hector Bombin and David Poulin for discussions concerning topological subsystem codes and sharing their unpublished work~\cite{PBGC:square_oct} in this area. The authors acknowledge fruitful and motivating discussions with David DiVincenzo concerning the use of topological subsystem codes.  BMT and SB acknowledge support by the DARPA QUEST program under contract number HR0011-09-C-0047. MS would like to thank the Physics of Information Group at IBM Research where this project originated for support, mentoring, and fruitful discussions.

\section*{Appendix~A}
In this section we prove that  an eigenvalue of a  stabilizer $S$ can be determined using eigenvalue measurements of the gauge group generators $K_j$ only if $S$ admits a decomposition Eq.~(\ref{syndrom0}) satisfying Eq.~(\ref{syndrom1}).

Indeed, the most general measurement protocol involves {\em adaptive} eigenvalue measurements: suppose we have already measured eigenvalues of some generators $K_1,\ldots,K_j$ obtaining eigenvalues $\gamma_1,\ldots,\gamma_j\in \{+1,-1\}$. The next generator to be measured, $K_{j+1}$, may be chosen as an arbitrary function of the outcomes $\gamma_1,\ldots,\gamma_j$ and, may be,  some extra random bits. Let
\[
\Pi_j =\frac12 (I + \gamma_j K_j)
\]
be the projector describing the eigenvalue measurement of $K_j$.
Let
\[
R_j=\Pi_j \cdots \Pi_1
\]
be the operator describing the entire sequence of measurements up to the step $j$  for some fixed configuration of outcomes $\gamma=(\gamma_1,\ldots,\gamma_m)$. We shall say that the above adaptive measurement protocol  simulates eigenvalue measurement of $S$ if there exist a function $\sigma\, : \, \{+1,-1\}^m \to \{+1,-1\}$ such that
\be
\label{simulation1}
SR_m=R_m S = \sigma(\gamma)\, R_m
\ee
for all $\gamma\in \{+1,-1\}^m$. Equivalently, the reduced state obtained after the last measurement must be an eigenvector of $S$ with an eigenvalue $\sigma(\gamma)$. Below we prove that this is possible only if $S$ admits a decomposition Eq.~(\ref{syndrom0}) satisfying Eq.~(\ref{syndrom1}), may be, with a different $m$.

Indeed, for any fixed outcomes $\gamma$ define a family of stabilizer codes (Abelian groups of Pauli operators) $\calS_0,\calS_1,\ldots,\calS_m\subseteq \calP_n$ such that $\calS_0=\la I\ra$, $\calS_1=\la \gamma_1 K_1\ra$, and
\be
\label{part_stabilizer1}
\calS_{j+1}=\la \gamma_{j+1} K_{j+1}, \; \calS_j'\ra,
\ee
where
\be
\label{part_stabilizer2}
\calS_j'=\{ P\in \calS_j \, : \, PK_{j+1} = K_{j+1} P\}
\ee
for $j=1,\ldots,m-1$ and $\calS_m'=\calS_m$. Using the standard stabilizer formalism~\cite{book_nielsen_chuang} one can easily check  that $R_j = \Pi_j U_j$, where $\Pi_j$ is the projector onto the codespace of the stabilizer code $\calS_j$ and $U_j$ is some Clifford group unitary operator that belongs to the algebra generated by $K_1,\ldots,K_j$. In particular, both $\Pi_j$ and $U_j$ commute with $S$. Obviously, condition Eq.~(\ref{simulation1}) holds only if $\sigma(\gamma) S \in \calS_m$. Using the definitions Eqs.~(\ref{part_stabilizer1},\ref{part_stabilizer2}) we conclude that
\be
\label{part_stabilizer3}
\sigma(\gamma) S = (\gamma_m K_m)^{\alpha(m)} S_{m-1} \quad \mbox{for some} \quad S_{m-1} \in \calS_{m-1}'
\ee
and some $\alpha(m)\in \{0,1\}$. Note that $K_m$ commutes with $S_{m-1}$ by definition of the group  $\calS_{m-1}'$. Using the inclusion $\calS_j'\subseteq \calS_j$ inductively we arrive at the decomposition
\be
\label{part_stabilizer4}
\sigma(\gamma) S= (\gamma_m K_m)^{\alpha(m)}  \cdots (\gamma_1 K_1)^{\alpha(1)}
\ee
for some $\alpha(1),\ldots,\alpha(m)\in \{0,1\}$. This decomposition obeys $(\gamma_j K_j)^{\alpha(j)} \cdots (\gamma_1 K_1)^{\alpha(1)} \in \calS_j'$ for all $j=1,\ldots,m$. Hence $K_j^{\alpha(j)}$ commutes with $K_{j-1}^{\alpha(j-1)} \cdots K_1^{\alpha(1)}$ for all $j$. Omitting all the factors with $\alpha(j)=0$ we arrive at the decomposition Eqs.~(\ref{syndrom0},\ref{syndrom1}).

\section*{References}


\end{document}